\documentclass{article}
\usepackage{latexsym,amssymb,enumerate,amsmath,amsthm} 
  \usepackage{graphicx} 
  \usepackage{tikz}
  \usepackage[square,sort,comma,numbers]{natbib}
  \usepackage{hyperref}
\usepackage{latexsym,amssymb,enumerate,amsmath} 
\usepackage{pgfplots}
\usepackage{xcolor,soul}

\numberwithin{equation}{section}

\def\E{\mathbb{E}}
\def\P{\mathbb{P}}
\def\R{\mathbb{R}}

\def\Z{{\mathbf Z}}

\def\n{{\mathbf n}}
\def\T{\text{T}}
\def\1{{\mathbf 1}}

\def\O{\mathcal{O}}

\newcommand{\dd}{\text{d}}

\def\x{{\mathbf x}}
\def\nf{n_{\textup{f}}}
\def\Z{\widetilde{Z}}

\def\T{\textup{T}}
\def\X{\mathbf{X}}

\def\wtau{\widetilde{\tau}}
\def\oned{\textup{1d}}
\def\twod{\textup{2d}}
\def\threed{\textup{3d}}
\def\side{\textup{side}}
\def\top{\textup{top}}
\def\bot{\textup{bot}}
\def\esc{\textup{esc}}
\def\ball{\textup{ball}}

\newtheorem{theorem}{Theorem}
\newtheorem{lemma}[theorem]{Lemma}

\theoremstyle{plain}
\theoremstyle{remark}




\usepackage{amsfonts}
\usepackage{graphicx}
\usepackage{epstopdf}
\usepackage{algorithmic}
\ifpdf
  \DeclareGraphicsExtensions{.eps,.pdf,.png,.jpg}
\else
  \DeclareGraphicsExtensions{.eps}
\fi

\newcommand{\TheTitle}{A probabilistic approach to extreme statistics of Brownian escape times in dimensions 1, 2, and 3}
\newcommand{\TheAuthors}{Sean D. Lawley and Jacob B. Madrid}


\ifpdf
\hypersetup{
  pdftitle={\TheTitle},
  pdfauthor={\TheAuthors}
}
\fi

\begin{document}

\title{{\TheTitle}
}

\author{Sean D. Lawley\thanks{Department of Mathematics, University of Utah, Salt Lake City, UT 84112 USA (\texttt{lawley@math.utah.edu}).}
\; and Jacob B. Madrid\thanks{Department of Mathematics, University of Utah, Salt Lake City, UT 84112 USA (\texttt{madrid@math.utah.edu}).}
}
\date{\today}
\maketitle

\begin{abstract}
First passage time (FPT) theory is often used to estimate timescales in cellular and molecular biology. While the overwhelming majority of studies have focused on the time it takes a given single Brownian searcher to reach a target, cellular processes are instead often triggered by the arrival of the first molecule out of many molecules. In these scenarios, the more relevant timescale is the FPT of the first Brownian searcher to reach a target from a large group of independent and identical Brownian searchers. Though the searchers are identically distributed, one searcher will reach the target before the others and will thus have the fastest FPT. This fastest FPT depends on extremely rare events and its mean can be orders of magnitude faster than the mean FPT of a given single searcher. In this paper, we use rigorous probabilistic methods to study this fastest FPT. We determine the asymptotic behavior of all the moments of this fastest FPT in the limit of many searchers in a general class of two and three dimensional domains. We establish these results by proving that the fastest searcher takes an almost direct path to the target.
\end{abstract}


\section{Introduction}

Several{ investigations} and commentaries have recently announced a paradigm shift in studying cellular activation rates \cite{basnayake2019, schuss2019, coombs2019, redner2019, sokolov2019, rusakov2019, martyushev2019, tamm2019}. This{ work} has produced new questions, calls for more investigation, and intriguing conjectures to explain the seeming ``redundancy'' that marks many biological systems \cite{schuss2019}. Indeed, this work has led to the formulation of the so-called ``redundancy principle,'' which asserts that many seemingly redundant copies of an object (cells, proteins,{ molecules, }etc.)\ are not a waste, but rather have the specific function of accelerating cellular activation rates \cite{schuss2019}. For example, this principle has been used to explain (i) why thousands of neurotransmitters are released in order to activate only a few receptors in the synaptic cleft \cite{basnayake2019c} and (ii) why $300$ million sperm cells attempt to find the oocyte in human fertilization, when only a single sperm cell is necessary \cite{meerson2015, eisenbach2006, reynaud2015}.

To give the background for this paradigm shift, many cellular processes are triggered when a ``searcher'' reaches a ``target'' \cite{chou_first_2014, holcman_time_2014, bressloff_stochastic_2013}. Some examples are oocyte fertilization by the arrival of a sperm cell \cite{reynaud2015}, calcium release triggering by diffusing IP$_{3}$  molecules that reach IP$_{3}$  receptors \cite{wang1995}, gene activation by the arrival of a diffusing transcription factor to a certain gene \cite{larson2011}, adaptive immune response initiation by T cells binding to antigen presenting cells \cite{edwards2011, delgado2015}, and many more \cite{chou_first_2014}. In these systems, the first passage time (FPT) of a searcher to a target sets the timescale of activation.

This timescale has been estimated by calculating the mean first passage time (MFPT) of a given single searcher to a target. Indeed, many prior studies have calculated such MFPTs, especially for the case of a diffusing Brownian searcher and a small target, which is the so-called narrow escape problem \cite{benichou2008, holcman2014, grebenkov2017, lindsay2017opt, lawley2019dtmfpt, kurella2015, ward10, cheviakov2010, lindsay2017}. However, {the relevant timescale in many systems} is not the MFPT of a given single searcher but rather the MFPT of the fastest searcher out of many searchers \cite{schuss2019, barlow2016, basnayake2019fast, guerrier2018, hartich2019}.{ For important earlier work on such fastest FPTs, see }\cite{weiss1983, yuste1996, yuste1997, yuste_diffusion_2000, yuste2001, van2003, yuste2007, grebenkov2010}.

To illustrate, consider $N$ noninteracting searchers diffusing in a bounded domain $\Omega\subset\R^{d}$ with reflecting boundary $\partial\Omega$. The independent and identically distributed (iid) searchers move by pure Brownian motion with diffusivity $D>0$ and are each initially placed at some $\x_{0}\in\Omega$. 

For these $N$ searchers, let $\tau_{1},\dots,\tau_{N}$ be their $N$ iid FPTs to find a target $\partial\Omega_{\T}\subset\partial\Omega$. Specifically, if the position of the $n$-th searcher at time $t\ge0$ is denoted by $\X_{n}(t)\in\overline{\Omega}$, then
\begin{align*}
\tau_{n}
:=\inf\{t>0:\X_{n}(t)\in\partial\Omega_{\T}\}.
\end{align*}
The MFPT of a given single searcher is $\E[\tau_{n}]=\E[\tau_{1}]$ and its behavior has been studied extensively. However, the more relevant timescale in many biological systems is actually the FPT of the fastest searcher,
\begin{align*}
T
:=\min\{\tau_{1},\dots,\tau_{N}\}.
\end{align*}
That is, $T$ is the first time that any of the $N$ searchers reaches the target.

\begin{figure}
\begin{center}
\includegraphics[width=11cm]{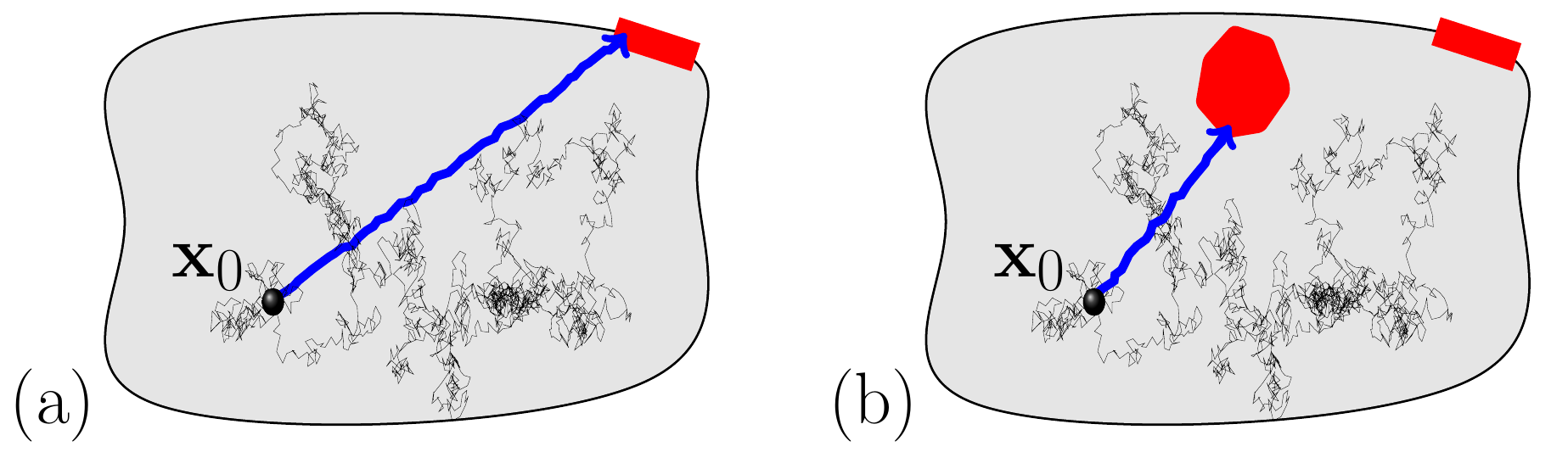}
\caption{
The {thin black} curve illustrates a typical searcher path that wanders around the domain before finding the target. The solid blue curve illustrates that the fastest searcher out of $N\gg1$ searchers takes a straight path to the target. The targets $\partial\Omega_{\T}$ are the red regions and can be on the outer boundary (panel (a) has one outer boundary target) and/or on inner boundaries of the domain (panel (b) has one outer boundary target and one inner boundary target).}\label{figextreme}
\end{center}
\end{figure}

Importantly, if there are many searchers ($N\gg1$) and the target is small, then these two times can be drastically different,
\begin{align*}
\E[T]\ll\E[\tau_{1}].
\end{align*}
The essential reason for this drastic difference in timescales is that $\E[\tau_{1}]$ describes a typical searcher that wanders around the domain before finding the target, while  {$\E[T]$} depends on extremely rare events in which a searcher happens to go directly to the target. Indeed, a significant point that is argued in references \cite{godec2016, godec2016x, basnayake_extreme_2018, schuss2019} is that the path of the fastest searcher closely follows the shortest path from its initial position to the target, see Figure~\ref{figextreme}.

{Compared to the MFPT of a given single searcher, $\E[\tau_{1}]$}, much less is known about the FPT of the fastest searcher, and this is especially true in space dimensions 2 and 3 (2d and 3d). However, it has been known since {1983} \cite{weiss1983} that in one space dimension (1d), the fastest FPT, $T=T_{\oned}$, has the following asymptotic mean behavior as the number of searchers $N$ grows,
\begin{align}\label{1d}
\E[T_{\text{1d}}]
\sim\frac{z_{0}^{2}}{4D\log N}\quad\text{as }N\to\infty,
\end{align}
where $z_{0}$ is the initial distance between the searchers and the target. (Throughout this paper, the notation ``$f\sim g$ as $N\to\infty$'' means $\lim_{N\to\infty}f/g=1$.) More generally, for the spherically symmetric problem of escape from a hypersphere of radius $z_{0}$ in dimension $d\ge1$,
\begin{align*}
\Omega:=\{\x\in\R^{d}:\|\x\|<z_{0}\},\quad
\partial\Omega_{\T}:=\partial\Omega=\{\x\in\R^{d}:\|\x\|=z_{0}\},\quad
\x_{0}=0,
\end{align*}
the 2001 reference~\cite{yuste2001} found the following behavior for the $m$-th moment of the fastest FPT, $T=T_{d\textup{-dim sphere}}$, for any $m\ge1$ and $d\ge1$,
\begin{align}\label{ball}
\E[(T_{d\textup{-dim sphere}})^{m}]
\sim\Big(\frac{z_{0}^{2}}{4D\log N}\Big)^{m}\quad\text{as }N\to\infty.
\end{align}

The interest and development of the ``redundancy principle'' was started by an important recent study that investigated $\E[T]$ in 2d and 3d {bounded} domains \cite{basnayake2019}. In \cite{basnayake2019}, the authors employed formal asymptotic analysis of partial differential equations (PDEs) to argue that the large $N$ behavior of $\E[T]$ in a bounded 2d domain with a small target is the same as the 1d behavior in \eqref{1d}, where $z_{0}>0$ is the distance between the initial searcher position and the target. {The authors employed similar formal arguments to conclude that $\E[T]$ has qualitatively different behavior in bounded 3d domains. Specifically, rather than the $(\log N)^{-1}$ decay seen in 1d and 2d, the authors assert that the mean of the 3d fastest FPT, $T=T_{\threed}$, decays like $(\sqrt{\log N})^{-1}$ as $N\to\infty$.}

In this paper, we use probabilistic methods to prove that the large $N$ behavior of the fastest FPT in a general class of 2d and 3d domains is identical in mean to the 1d behavior in \eqref{1d}. Furthermore, for this general class of 2d and 3d domains, we prove that the asymptotic behavior of the $m$-th moment of the fastest FPT is identical to \eqref{ball}. That is, for any moment $m\ge1$, we prove that
 \begin{align}\label{main}
 \E[(T_{\threed})^{m}]
\sim\E[(T_{\twod})^{m}]
\sim\Big(\frac{z_{0}^{2}}{4D\log N}\Big)^{m}\quad\text{as }N\to\infty.
\end{align}
The general class of 2d and 3d domains requires (i) that the domain contains the straight line path from the initial searcher location to the nearest point on the target and (ii) that a mild so-called star condition holds (see section~\ref{section general} for a precise statement{, but note that a convex domain is a sufficient condition). This corrects the aforementioned result of reference }\cite{basnayake2019},{ which was due to a small error (see their equation~(97)) which followed several pages of innovative formal calculations.}

The rest of the paper is organized as follows. In section~\ref{section cylinder}, we analyze the fastest FPT in a simple cylindrical domain in 3d. In section~\ref{section general}, we extend the methods developed for the cylindrical domain to a general class of 2d and 3d domains. We conclude with a brief discussion {highlighting additional questions about fastest FPTs.}

\section{Simple cylindrical geometry}\label{section cylinder}

Let the spatial domain $\Omega\subset\R^{3}$ be a cylinder of radius $r>0$ and height $h>0$,
\begin{align*}
\Omega:=\big\{(x,y,z)\in\R^{3}:x^{2}+y^{2}<r^{2},\,z\in(0,h)\big\}.
\end{align*}
Suppose that the boundary $\partial\Omega$ contains a distinguished region, $\partial\Omega_{\text{T}}\subset\partial\Omega$, which we refer to as the target, which is a disk of radius $a\in(0,r)$ at the bottom of the cylinder,
\begin{align*}
\partial\Omega_{\text{T}}:=\big\{(x,y,0)\in\R^{3}:x^{2}+y^{2}<a^{2}\big\}.
\end{align*}
Suppose that $N\ge1$ particles diffuse in $\Omega$ with diffusivity $D>0$, reflect from $\partial\Omega$, and are initially placed at
\begin{align*}
\x_{0}=(0,0,z_{0})\quad \text{with }z_{0}\in(0,h). 
\end{align*}
We are interested in the first time that any of the $N$ particles hits the target.{ See Figure}~\ref{figcyl}{a for an illustration.}

\begin{figure}
\begin{center}
\includegraphics[width=8cm]{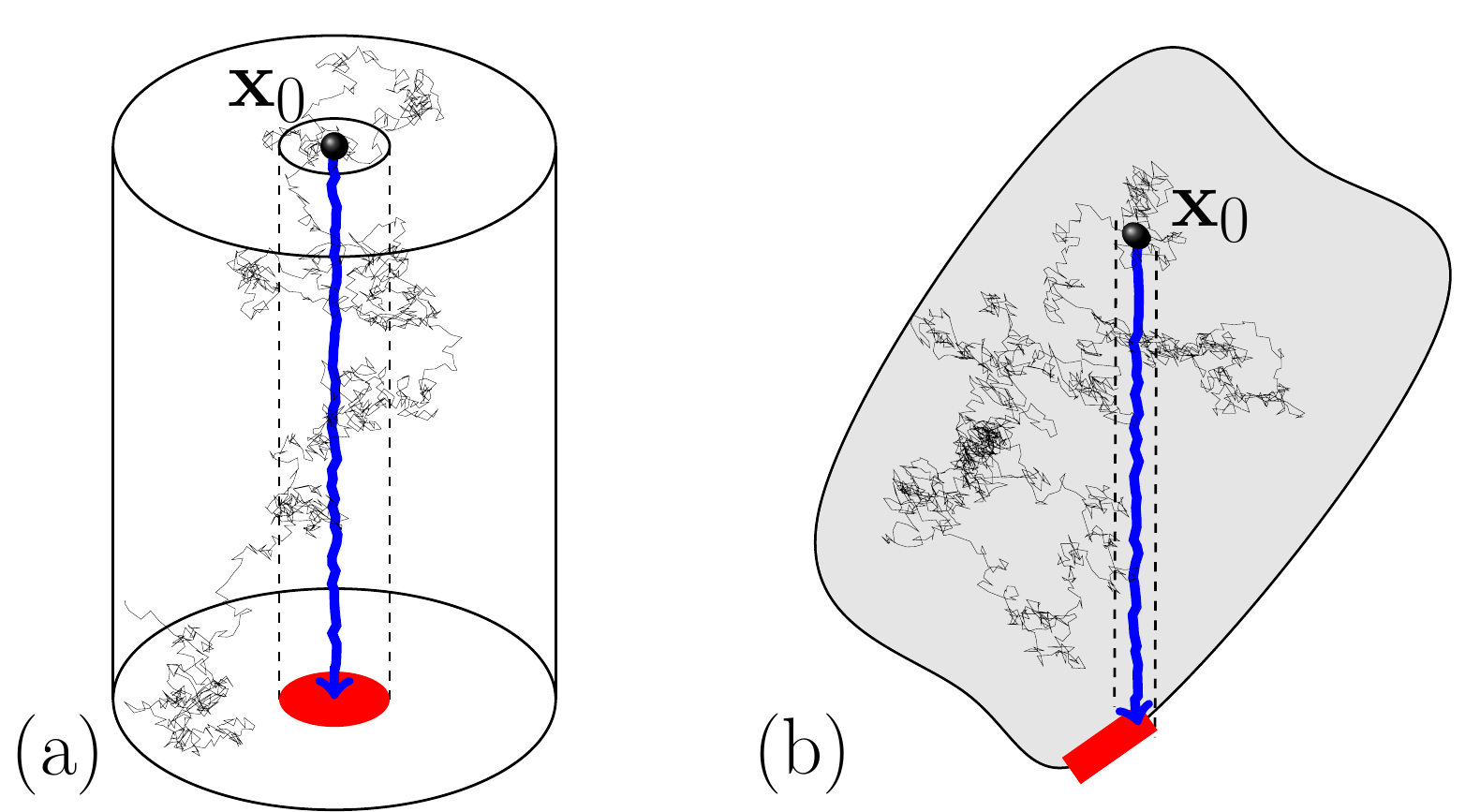}
\caption{
{(a) The simple cylindrical geometry considered in section}~\ref{section cylinder}{. The thin black curve depicts a typical searcher trajectory, while the thick blue curve shows the path of the fastest searcher. (b) The method developed for the cylinder in section}~\ref{section cylinder}{ is extended to more general spatial domains in section}~\ref{section general}{ by proving that the fastest searcher never leaves a thin tube connecting the initial location to the target.}}\label{figcyl}
\end{center}
\end{figure}

Let $Z_{n}(t)\in[0,h]$ and $R_{n}(t)\in[0,r]$ denote the height and radial position of the $n$-th particle at time $t\ge0$. Define the first time that the $n$-th particle reaches the target,
\begin{align*}
\tau_{n}:=\inf\{t>0:Z_{n}(t)=0,\;R_{n}(t)<a\},\quad n\in\{1,\dots,N\}.
\end{align*}
Next, define the first time that the $n$-th particle reaches the bottom of the cylinder (regardless of the radial position),
\begin{align*}
\tau_{n,z}:=\inf\{t>0:Z_{n}(t)=0\},\quad n\in\{1,\dots,N\}.
\end{align*}
Further, define the first time that any particle reaches the target and the first time that any particle reaches the bottom of the cylinder,
\begin{align}
T_{\threed}&:=\min_{n}\{\tau_{n}\}\nonumber\\
T_{\oned}&=T_{\oned}(z_{0}):=\min_{n}\{\tau_{n,z}\}.\label{Tz}
\end{align}
We sometimes write $T_{\oned}(z_{0})$ to emphasize that it is the time for a particle diffusing in 1d to reach a target that is distance $z_{0}>0$ from its initial position.

It is immediate that $T_{\oned}\le T_{\threed}$ with probability one, since $\tau_{n,z}\le\tau_{n}$ for each $n\in\{1,\dots,N\}$. The following theorem shows that the $m$-th moments of $T_{\threed}$ and $T_{\oned}$ for $m\ge1$ become identical as $N$ grows. Throughout this paper, the notation ``$f\sim g$ as $N\to\infty$'' means $\lim_{N\to\infty}f/g=1$.

\begin{theorem}\label{theorem simple}
For any moment $m\ge1$, we have that
\begin{align*}
\E[(T_{\threed})^{m}]
\sim\E[(T_{\oned}(z_{0}))^{m}]
\sim\Big(\frac{z_{0}^{2}}{4D\log N}\Big)^{m}\quad\text{as }N\to\infty.
\end{align*}
\end{theorem}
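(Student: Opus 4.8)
The plan is to sandwich $\E[(T_{\threed})^{m}]$ between matching bounds, writing $t_{N}:=z_{0}^{2}/(4D\log N)$ throughout. The lower bound is immediate: since $T_{\oned}\le T_{\threed}$ almost surely, $\E[(T_{\oned})^{m}]\le\E[(T_{\threed})^{m}]$, so it suffices to (i) establish the one-dimensional asymptotic $\E[(T_{\oned}(z_{0}))^{m}]\sim t_{N}^{m}$ and (ii) prove the matching upper bound $\limsup_{N}\E[(T_{\threed})^{m}]/t_{N}^{m}\le1$.

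For step (i), I would work from the explicit small-time tail of the single vertical first passage time. Ignoring the reflecting boundary at $z=h$ (whose image contributions are exponentially smaller as $t\to0$), the vertical coordinate is $z_{0}$ plus $\sqrt{2D}$ times a standard Brownian motion, so the reflection principle gives $\P(\tau_{1,z}\le t)\sim\frac{2\sqrt{2Dt}}{z_{0}\sqrt{2\pi}}\,e^{-z_{0}^{2}/(4Dt)}$ as $t\to0$. Since the $\tau_{n,z}$ are iid, $\P(T_{\oned}>t)=\P(\tau_{1,z}>t)^{N}$, and evaluating at $t=(1\pm\eps)t_{N}$ shows that $N\,\P(\tau_{1,z}\le(1-\eps)t_{N})\to0$ while $N\,\P(\tau_{1,z}\le(1+\eps)t_{N})\to\infty$. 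A union bound on the first statement and the product bound $\P(T_{\oned}>t)\le e^{-N\P(\tau_{1,z}\le t)}$ on the second give $T_{\oned}/t_{N}\to1$ in probability; combined with the same integral-splitting tail argument used below for $T_{\threed}$ (but with no radial constraint, since every particle reaching $z=0$ contributes to $T_{\oned}$) this upgrades to $\E[(T_{\oned})^{m}]\sim t_{N}^{m}$. This is the classical one-dimensional extreme-value asymptotic and could instead be cited from \eqref{ball} and \cite{weiss1983,yuste2001}.

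The core of the argument is the upper bound, which formalizes the ``almost direct path'' heuristic. The key structural fact is that up to the stopping time $\tau_{n,z}$ the particle has not yet touched the plane $z=0$, so on $[0,\tau_{n,z})$ its vertical coordinate $Z_{n}$ (a Brownian motion reflected only at $h$) and its horizontal coordinate $(X_{n},Y_{n})$ (a planar reflected Brownian motion) evolve \emph{independently}. Writing $t_{N}^{+}:=(1+\eps)t_{N}$, I define the radial-confinement event $A_{n}:=\{\sup_{s\le t_{N}^{+}}R_{n}(s)<a\}$. On $A_{n}\cap\{\tau_{n,z}\le t_{N}^{+}\}$ the particle reaches $z=0$ with radial position below $a$, hence is absorbed, so $\tau_{n}=\tau_{n,z}\le t_{N}^{+}$. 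Because $R_{n}$ starts at $0$ and $t_{N}^{+}\to0$, a reflection-principle estimate gives $\P(A_{n})\ge1-C\,e^{-a^{2}/(8Dt_{N}^{+})}\to1$, and $A_{n}$ is independent of $\tau_{n,z}$. Using independence across the $N$ particles,
\begin{align*}
\P(T_{\threed}>t_{N}^{+})
\le\prod_{n=1}^{N}\big(1-\P(A_{n})\,\P(\tau_{n,z}\le t_{N}^{+})\big)
\le\exp\big(-N\,\P(A_{1})\,\P(\tau_{1,z}\le t_{N}^{+})\big),
\end{align*}
and the exponent tends to $+\infty$ like $(\log N)^{-1/2}N^{\eps/(1+\eps)}$, so $\P(T_{\threed}>t_{N}^{+})\to0$ faster than any power of $1/\log N$.

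Finally I would convert this concentration into a moment bound via $\E[(T_{\threed})^{m}]=\int_{0}^{\infty}mt^{m-1}\P(T_{\threed}>t)\,dt$, splitting the integral at $t_{N}^{+}$. The part over $[0,t_{N}^{+}]$ is at most $(t_{N}^{+})^{m}=(1+\eps)^{m}t_{N}^{m}$. For the tail, write $S(t):=\P(\tau_{1}>t)$ so that $\P(T_{\threed}>t)=S(t)^{N}$; for $t\ge t_{N}^{+}$ the factorization $S(t)^{N}\le S(t_{N}^{+})^{N/2}S(t)^{N/2}$ yields
\begin{align*}
\int_{t_{N}^{+}}^{\infty}mt^{m-1}S(t)^{N}\,dt
\le S(t_{N}^{+})^{N/2}\int_{0}^{\infty}mt^{m-1}S(t)^{N/2}\,dt
\le\P(T_{\threed}>t_{N}^{+})^{1/2}\,\E[\tau_{1}^{m}],
\end{align*}
where the last step uses $S(t)^{N/2}\le S(t)$ for $N\ge2$, the identity $S(t_{N}^{+})^{N/2}=\P(T_{\threed}>t_{N}^{+})^{1/2}$, and $\E[\tau_{1}^{m}]<\infty$ because the single-particle first passage time in the bounded cylinder has an exponential tail. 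Since $\P(T_{\threed}>t_{N}^{+})^{1/2}$ decays faster than any power of $1/\log N$, this tail is $o(t_{N}^{m})$, giving $\limsup_{N}\E[(T_{\threed})^{m}]/t_{N}^{m}\le(1+\eps)^{m}$; letting $\eps\downarrow0$ finishes the upper bound and hence the theorem. I expect the main obstacle to be precisely this last moment upgrade: the direct-path construction controls $T_{\threed}$ only on a high-probability event, and ruling out a non-negligible contribution from the rare configurations in which \emph{every} searcher wanders requires the super-polynomial tail estimate above rather than mere convergence in probability.
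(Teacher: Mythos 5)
Your proof is correct, and it takes a genuinely different route from the paper's. The paper argues by pathwise coupling: it identifies the random index $\nf$ of the particle whose vertical coordinate first hits $z=0$, introduces the event $A=\{\tau_{\nf,\side}>T_{\oned}\}$ on which $T_{\threed}=T_{\oned}$ \emph{exactly}, controls the error by Cauchy--Schwarz, $\E[(T_{\threed}^{m}-T_{\oned}^{m})1_{A^{c}}]\le\sqrt{\P(A^{c})}\sqrt{\E[\tau_{1}^{2m}]}$, and bounds $\P(A^{c})$ by integrating $\P(\tau_{1,\side}\le t)$ against the density of $T_{\oned}$ with a split at the deterministic time $\delta=(\log N)^{-1/2}$ (Lemmas~\ref{bound radial} and~\ref{bound z}), finishing with Jensen's inequality, \eqref{1d}, and the citation of \eqref{ball} for the one-dimensional asymptotics --- exactly your fallback in step (i). You avoid the random index $\nf$ and the density of $T_{\oned}$ altogether by working at the fixed threshold $t_{N}^{+}=(1+\eps)t_{N}$ with per-particle confinement events $A_{n}$, multiplying the per-particle success probabilities $\P(A_{n})\P(\tau_{n,z}\le t_{N}^{+})$; this is legitimate because $R_{n}$ and $Z_{n}$ are independent in the product domain, which is the same structural fact underpinning the paper's identity $\P(\tau_{\nf,\side}\le T_{\oned})=\P(\tau_{1,\side}\le T_{\oned})$. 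Your route yields the quantitatively stronger, stretched-exponential tail bound $\P(T_{\threed}>t_{N}^{+})\le\exp(-cN^{\eps/(1+\eps)}(\log N)^{-1/2})$, whereas the paper's estimate of $\P(A^{c})$ decays only like $e^{-C\sqrt{\log N}}$ (ample, since it is measured against $(\log N)^{-m}$); on the other hand, the paper's coupling buys the pathwise statement $T_{\threed}=T_{\oned}$ with high probability --- the sharpest expression of the ``direct path'' heuristic --- and gives the comparison $\E[(T_{\threed})^{m}]\sim\E[(T_{\oned})^{m}]$ directly, which is itself part of the theorem's statement (in your write-up this comparison emerges only after both sides are separately matched to $t_{N}^{m}$, so you should say so explicitly). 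Your moment-upgrade device $S(t)^{N}\le S(t_{N}^{+})^{N/2}S(t)$ together with $\E[\tau_{1}^{m}]<\infty$ plays precisely the role of the paper's Cauchy--Schwarz step, and is in fact essentially the same splitting the paper deploys later in the proof of Lemma~\ref{lemma reflect} for general domains. The only soft spot, not a gap: your step (i) is a sketch (for the lower moment bound one should record $\E[T_{\oned}^{m}]\ge((1-\eps)t_{N})^{m}\,\P(T_{\oned}>(1-\eps)t_{N})$ with the union bound driving the probability to $1$), but since you correctly note that \eqref{ball} can simply be cited --- which is what the paper itself does --- the argument stands as written.
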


We make {four} comments on Theorem~\ref{theorem simple}. {First, Theorem}~\ref{theorem simple} {holds for any fixed target size (including a small target)}. Second, the proof of Theorem~\ref{theorem simple} relies on {proving} that the path of the first particle to reach the target is almost a straight line from the initial position to the{ target}. {Third}, while Theorem~\ref{theorem simple} concerns a very specific spatial domain, we extend the argument to much more general spatial domains in Section~\ref{section general} {(see Figure}~\ref{figcyl}{b for an illustration).}

{Fourth, we emphasize that the cylinder is assumed to be finite, meaning $r<\infty$ and $h<\infty$. In fact, if $r=h=\infty$, then the conclusion of the theorem cannot hold since $\E[T_{\threed}]=\infty$ for every $N\ge1$. To see this, note that since Brownian motion is transient in 3d, each particle has a strictly positive probability of never reaching the target,}
\begin{align*}
\P(\tau_{1}=\infty)>0.
\end{align*}
{Therefore, for any $N\ge1$, there is a strictly positive probability that all of the $N$ particles never reach the target,}
\begin{align*}
\P(T_{\threed}=\infty)
=(\P(\tau_{1}=\infty))^{N}>0.
\end{align*}
{Since any random variable that is infinite with strictly positive probability must have infinite expectation, we obtain $\E[T_{\threed}]=\infty$. We note that this is distinct from the phenomenon where the MFPTs of single particles can be infinite while the MFPT of the fastest particle can be finite. Indeed, in the 1d case above with $h=\infty$, it is known that $\E[\tau_{1,z}]=\infty$, but $\E[T_{\oned}]<\infty$ for $N\ge3$} \cite{lindenberg1980}.

\subsection{Some lemmas and the proof of Theorem~\ref{theorem simple}}

Before proving Theorem~\ref{theorem simple}, we outline the basic idea of the proof. Notice that if the first particle that hits the bottom of the cylinder hits the bottom before its radial position escapes the ``inner'' cylinder of radius $a>0$, then $T_{\threed}=T_{\oned}$. Further, since $T_{\oned}\to0$ almost surely as $N\to\infty$, this event happens with high probability since the radial position has little time to escape this inner cylinder before time $T_{\oned}$. Therefore, $\E[(T_{\threed})^{m}]\approx\E[(T_{\oned})^{m}]$ if $N\gg1$.

This first preliminary lemma collects some basic bounds on the short-time behavior of a radial diffusion process (a 2d Bessel process). In particular, it allows us to quantify the probability that a particle quickly escapes the inner cylinder.

\begin{lemma}\label{bound radial}
Let $\tau_{n,\side}$ denote the first time that the $n$-th radial position escapes the disk of radius $a>0$,
\begin{align*}
\tau_{n,\side}
:=\inf\{t>0:R_{n}(t)>a\}.
\end{align*}
That is, $\tau_{n,\side}$ is the first time that the $n$-th particle escapes the inner cylinder of radius $a>0$ through its ``side.''

There exists constants $C_{1},C_{2}>0$ (depending on $a$ and $D$) so that for all $t>0$ sufficiently small,
\begin{align*}
\P(\tau_{1,\side}\le t)
&\le C_{1}t^{-1/2}\exp(-C_{2}t^{-1}).
\end{align*}
\end{lemma}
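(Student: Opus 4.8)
The plan is to control the escape probability by a short-time large-deviations estimate for Brownian motion, obtained through the reflection principle and a Gaussian tail bound. First I would observe that, because each particle starts at the axis of the cylinder and $a<r$, the radial coordinate $R_{1}$ must reach the value $a$ strictly before the horizontal motion can reach the outer radius $r$. Consequently the reflecting boundary at radius $r$ plays no role on the time interval $[0,\tau_{1,\side}]$, and on this interval the horizontal motion agrees in law with a free planar Brownian motion $\W(t)=(W^{(1)}(t),W^{(2)}(t))$ of diffusivity $D$ started at the origin, so that $R_{1}(t)=\|\W(t)\|$. By continuity of paths this yields the containment of events $\{\tau_{1,\side}\le t\}\subseteq\{\sup_{0\le s\le t}\|\W(s)\|\ge a\}$, which reduces the claim to a maximal inequality for planar Brownian motion.

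Next I would split the planar excursion into its two scalar coordinates. Since $\|\W(s)\|\ge a$ forces $|W^{(1)}(s)|\ge a/\sqrt{2}$ or $|W^{(2)}(s)|\ge a/\sqrt{2}$, a union bound gives
\[
\P(\tau_{1,\side}\le t)\le\sum_{i=1}^{2}\P\Big(\sup_{0\le s\le t}|W^{(i)}(s)|\ge a/\sqrt{2}\Big).
\]
Each coordinate $W^{(i)}$ is a one-dimensional Brownian motion with $W^{(i)}(t)\sim\mathcal{N}(0,2Dt)$, so the reflection principle gives $\P(\sup_{s\le t}|W^{(i)}(s)|\ge b)\le 4\,\P(W^{(i)}(t)\ge b)$. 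Applying the Gaussian tail estimate $\P(\mathcal{N}(0,\sigma^{2})\ge b)\le\frac{\sigma}{b\sqrt{2\pi}}\exp(-b^{2}/(2\sigma^{2}))$ with $\sigma^{2}=2Dt$ and $b=a/\sqrt{2}$ produces, after collecting the two terms, a bound of the form $C_{1}'\,t^{1/2}\exp(-a^{2}/(8Dt))$. Setting $C_{2}:=a^{2}/(8D)$, absorbing the remaining numerical and parameter-dependent factors into a constant, and using $t^{1/2}\le t^{-1/2}$ for $t\le1$, yields the stated inequality $\P(\tau_{1,\side}\le t)\le C_{1}t^{-1/2}\exp(-C_{2}t^{-1})$ for all sufficiently small $t$, with $C_{1},C_{2}>0$ depending only on $a$ and $D$.

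I do not anticipate a serious obstacle, since the estimate is essentially a standard short-time bound for diffusive escape. The only point demanding genuine care is the reduction in the first step, namely verifying that neither the reflecting outer boundary at radius $r$ nor the joint confinement of the particle inside the cylinder alters the law of the radial coordinate before its first passage to $a$; this is precisely where the hypothesis $a<r$ (together with the central starting point) is used, as the first passage of $R_{1}$ to $a$ necessarily precedes any contact with the boundary at radius $r$. A secondary, purely cosmetic point is that the natural bound carries a factor $t^{1/2}$ rather than the $t^{-1/2}$ recorded in the statement; the weaker exponent is harmless and is all that is needed subsequently, so I would simply state the result in the $t^{-1/2}$ form.
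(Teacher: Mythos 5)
Your proof is correct, but it takes a genuinely different route from the paper. Both arguments begin with the same (necessary) reduction, which the paper leaves implicit and you rightly make explicit: since the particle starts on the axis and $a<r$, the reflecting outer boundary is inactive up to $\tau_{1,\side}$, so the horizontal motion may be identified with a free planar Brownian motion there. After that, the paper proceeds by \emph{stochastic domination in dimension}: it observes that the planar radial process is a 2d Bessel process $\sqrt{2D}\sqrt{W_{1}^{2}+W_{2}^{2}}$, adds a third independent coordinate to dominate it by the 3d Bessel process $\sqrt{2D}\sqrt{W_{1}^{2}+W_{2}^{2}+W_{3}^{2}}$ (whose norm is pathwise larger, so its hitting time of $a$ is smaller), and then invokes a known closed-form series for the 3d Bessel first-passage distribution, reading off the explicit constants $C_{1}=4a(\pi D)^{-1/2}$ and $C_{2}=a^{2}/(4D)$. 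You instead use a coordinatewise union bound ($\|\mathbf{W}(s)\|\ge a$ forces $|W^{(i)}(s)|\ge a/\sqrt{2}$ for some $i$), the reflection principle $\P(\sup_{s\le t}|W^{(i)}(s)|\ge b)\le 4\,\P(W^{(i)}(t)\ge b)$, and the standard Gaussian tail bound. What your route buys is self-containedness: no citation of an explicit Bessel first-passage formula is needed, only textbook facts about one-dimensional Brownian motion. What it costs is a weaker exponential rate, $C_{2}=a^{2}/(8D)$ versus the paper's $a^{2}/(4D)$, and a crude final prefactor (your $t^{1/2}$ converted to $t^{-1/2}$). Both losses are immaterial here: in the proof of Theorem~\ref{theorem simple} the lemma is used with $\delta=(\log N)^{-1/2}$, where $\exp(-C_{2}\sqrt{\log N})$ decays faster than any power of $\log N$ for \emph{any} $C_{2}>0$, so your constants suffice. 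Your closing remarks also correctly identify the one delicate point (decoupling the radial motion from the reflecting boundary before first passage to $a$), so the argument is complete as proposed.
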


The next lemma bounds the short-time behavior of a 1d diffusion. In particular, it allows us to quantify the rate that $T_{\oned}$ vanishes as $N\to\infty$.

\begin{lemma}\label{bound z}
There exists constants $C_{3},C_{4}>0$ (depending on $z_{0}$ and $D$) so that for all $t>0$ sufficiently small,
\begin{align*}
\P(\tau_{1,z}>t)
&\le 1-C_{3}t^{1/2}\exp(-C_{4}t^{-1}).
\end{align*}
\end{lemma}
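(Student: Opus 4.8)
The plan is to establish the equivalent lower bound $\P(\tau_{1,z}\le t)\ge C_{3}t^{1/2}\exp(-C_{4}t^{-1})$ for all small $t>0$, with the explicit choice $C_{4}=z_{0}^{2}/(4D)$. The key observation is that over short times the reflecting wall at $z=h$ can only help the particle reach $0$ faster, so I would first reduce to an unrestricted one-dimensional Brownian motion. Concretely, let $\widetilde{Z}(s):=z_{0}+\sqrt{2D}\,B(s)$ be a free Brownian motion on $\R$ (matching the generator $D\,\partial_{zz}$ of the height coordinate $Z_{1}$), and set $\widetilde{\tau}:=\inf\{s>0:\widetilde{Z}(s)=0\}$. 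Solving the one-sided Skorokhod problem at the upper barrier $h$ realizes the reflected height process as $Z_{1}(s)=\widetilde{Z}(s)-L(s)$ with $L$ nonnegative and nondecreasing, so $Z_{1}(s)\le\widetilde{Z}(s)$ pathwise up to the first time $Z_{1}$ hits $0$. Since $Z_{1}$ lies below $\widetilde{Z}$ and starts at $z_{0}>0$, it must reach $0$ no later than $\widetilde{Z}$ does, giving $\tau_{1,z}\le\widetilde{\tau}$ almost surely and hence $\P(\tau_{1,z}\le t)\ge\P(\widetilde{\tau}\le t)$.

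Next I would compute $\P(\widetilde{\tau}\le t)$ explicitly. Writing $\widetilde{\tau}$ as the first time the standard Brownian motion $B$ reaches level $-z_{0}/\sqrt{2D}$ and invoking the reflection principle (so that $\max_{0\le s\le t}B(s)\overset{d}{=}|B(t)|$) yields
\begin{align*}
\P(\widetilde{\tau}\le t)
=\P\Big(\max_{0\le s\le t}(-B(s))\ge \tfrac{z_{0}}{\sqrt{2D}}\Big)
=2\Big(1-\Phi\big(\tfrac{z_{0}}{\sqrt{2Dt}}\big)\Big),
\end{align*}
where $\Phi$ denotes the standard normal distribution function.

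Finally, I would insert the elementary Gaussian tail lower bound $1-\Phi(x)\ge \tfrac{1}{\sqrt{2\pi}}\tfrac{x}{1+x^{2}}e^{-x^{2}/2}$ with $x=z_{0}/\sqrt{2Dt}$. As $t\to0$ we have $x\to\infty$ and $x/(1+x^{2})\sim 1/x=\sqrt{2Dt}/z_{0}$, while $e^{-x^{2}/2}=\exp(-z_{0}^{2}/(4Dt))$. Bounding $x/(1+x^{2})\ge 1/(2x)$ once $t$ is small enough gives $\P(\widetilde{\tau}\le t)\ge C_{3}t^{1/2}\exp(-z_{0}^{2}/(4Dt))$ for an explicit constant $C_{3}>0$ depending only on $z_{0}$ and $D$, which is exactly the claimed estimate with $C_{4}=z_{0}^{2}/(4D)$.

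I expect the only genuinely nonroutine point to be the reduction that disposes of the reflecting wall at $h$: one must verify that the upper reflection cannot delay the hitting of $0$. The pathwise Skorokhod domination above settles this cleanly; alternatively one could work on the event $\{\widetilde{\tau}\le t,\ \max_{0\le s\le t}\widetilde{Z}(s)<h\}$, on which the reflected and free processes coincide, at the cost of subtracting a negligible tail term. Everything else is the reflection principle together with standard Gaussian tail asymptotics.
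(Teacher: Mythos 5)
Your proposal is correct and follows essentially the same route as the paper: dominate the reflected height process by a free Brownian motion (a step the paper declares ``immediate'' and you justify cleanly via the Skorokhod decomposition), apply the reflection principle, and extract the bound from Gaussian tail behavior. Your explicit tail inequality $1-\Phi(x)\ge\frac{1}{\sqrt{2\pi}}\frac{x}{1+x^{2}}e^{-x^{2}/2}$ replaces the paper's asymptotic expansion of $\mathrm{erf}$ but yields the identical constants $C_{3}=\sqrt{D}/(\sqrt{\pi}\,z_{0})$ and $C_{4}=z_{0}^{2}/(4D)$.
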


The proofs of Lemmas~\ref{bound radial} and \ref{bound z} are straightforward and collected in the Appendix.

\begin{proof}[Proof of Theorem~\ref{theorem simple}]
Let $\nf\in\{1,\dots,N\}$ denote the index of the fastest particle to reach the bottom of the cylinder. That is,
\begin{align*}
\nf\in\{1,\dots,N:Z_{\nf}(T_{\oned})=0\}.
\end{align*}
Since the event that two or more particles reach the bottom of the cylinder at the same time has probability zero, the index $\nf$ is almost surely unique

Let $A$ be the event that $R_{\nf}(t)$ does not escape the disk of radius $a>0$ before time $T_{\oned}$. That is,
\begin{align*}
A
:=\{\tau_{\nf,\side}>T_{\oned}\}
=\{R_{\nf}(t)<a\text{ for all }t\in[0,T_{\oned}]\}.
\end{align*}
If $A$ happens, then $T_{\threed}=T_{\oned}$. In particular, if $1_{A}$ denotes the indicator function on the event $A$, then raising $T_{\threed}$ and $T_{\oned}$ to the $m$-th power yields
\begin{align*}
T_{\threed}^{m}1_{A}=T_{\oned}^{m}1_{A}\quad \text{almost surely}.
\end{align*}
Letting $A^{c}$ denote the complement of the event $A$, we then have that
\begin{align}\label{start}
\frac{\E[T_{\threed}^{m}]}{\E[T_{\oned}^{m}]}
&=\frac{\E[T_{\oned}^{m}1_{A}]}{\E[T_{\oned}^{m}]}+\frac{\E[T_{\threed}^{m}1_{A^{c}}]}{\E[T_{\oned}^{m}]}
=1+\frac{\E[(T_{\threed}^{m}-T_{\oned}^{m})1_{A^{c}}]}{\E[T_{\oned}^{m}]}.
\end{align}
Since $0<T_{\oned}\le T_{\threed}\le\tau_{1}$ almost surely, the Cauchy Schwarz inequality gives
\begin{align*}
\E[(T_{\threed}^{m}-T_{\oned}^{m})1_{A^{c}}]
\le\E[\tau_{1}^{m}1_{A^{c}}]
\le\sqrt{\P(A^{c})}\sqrt{\E[\tau_{1}^{2m}]},
\end{align*}
{where $\E[\tau_{1}^{2m}]<\infty$, since the domain is bounded.} In view of \eqref{start}, we now show that $\sqrt{\P(A^{c})}/\E[T_{\oned}^{m}]\to0$ as $N\to\infty$ to complete the proof.

Since $Z_{n}$ and $R_{n}$ are independent, we have that
\begin{align}\label{int}
\begin{split}
\P(A^{c})
=\P(\tau_{\nf,\side}\le T_{\oned})
=\P(\tau_{1,\side}\le T_{\oned})
&=\int_{0}^{\infty}\P(\tau_{1,\side}\le t)f(t)\,\dd t,
\end{split}
\end{align}
where $f$ is the probability density of $T_{\oned}$. By independence of the FPTs $\tau_{1,z},\dots,\tau_{N,z}$, we have that
\begin{align*}
f(t):=-S'(t)=N(S_{0}(t))^{N-1}f_{0}(t),
\end{align*}
where $S_{0}(t):=\P(\tau_{1,z}>t)$ is the so-called survival probability of $\tau_{1,z}$, $f_{0}(t):=-S_{0}'(t)$ is the probability density of $\tau_{1,z}$, and $S(t):=\P(T_{\oned}>t)=(S_{0}(t))^{N}$ is the survival probability of $T_{\oned}$.

Returning to \eqref{int}, for each $\delta>0$, we have that
\begin{align*}
\P(A^{c})
=\int_{0}^{\delta}\P(\tau_{1,\side}\le t)f(t)\,\dd t
+\int_{\delta}^{\infty}\P(\tau_{1,\side}\le t)f(t)\,\dd t
=:I_{1}(\delta)+I_{2}(\delta).
\end{align*}
Since $\P(\tau_{1,\side}\le t)$ is an increasing function of time, we have that
\begin{align}\label{i1}
\begin{split}
I_{1}(\delta)
=\int_{0}^{\delta}\P(\tau_{1,\side}\le t)f(t)\,\dd t
\le\P(\tau_{1,\side}\le \delta).
\end{split}
\end{align}
Furthermore, since $S_{0}(t)$ is a decreasing function of time, we have that
\begin{align}\label{i2}
\begin{split}
I_{2}(\delta)
&=\int_{\delta}^{\infty}\P(\tau_{1,\side}\le t)N(S_{0}(t))^{N-1}f_{0}(t)\,\dd t\\
&\le N(S_{0}(\delta))^{N-1}\int_{\delta}^{\infty}\P(\tau_{1,\side}\le t)f_{0}(t)\,\dd t
\le N(S_{0}(\delta))^{N-1}.
\end{split}
\end{align}

Now, setting
\begin{align*}
\delta=\frac{1}{\sqrt{\log N}}
\end{align*}
and using \eqref{i1}-\eqref{i2} and Lemmas~\ref{bound radial} and \ref{bound z}, we have that for $N$ sufficiently large,
\begin{align*}
\P(A^{c})
&\le
C_{1}(\log N)^{1/4}\exp(-C_{2}\sqrt{\log N})\\
&\quad+N\big[1-C_{3}(\log N)^{-1/4}\exp(-C_{4}\sqrt{\log N})\big]^{N-1}.
\end{align*}
Therefore, using that Jensen's inequality \cite{durrett2019} ensures that $(\E[T_{\oned}])^{m}\le\E[T_{\oned}^{m}]$ for $m\ge1$, we then have by \eqref{1d} that
\begin{align*}
\lim_{N\to\infty}\frac{\sqrt{\P(A^{c})}}{\E[T_{\oned}^{m}]}
\le\lim_{N\to\infty}\frac{\sqrt{\P(A^{c})}}{(\E[T_{\oned}])^{m}}=0.
\end{align*}
Therefore, \eqref{start} implies that $\E[T_{\threed}^{m}]\sim\E[T_{\oned}^{m}]$ as $N\to\infty$. Using \eqref{ball} completes the proof.
\end{proof}

\section{General 2d and 3d domains}\label{section general}

In this section, we extend the arguments of section~\ref{section cylinder} to a general class of 2d and 3d domains. To simplify the exposition, we first consider a 3d domain.

Let $\Omega\subset\R^{3}$ be a bounded and open 3d spatial domain. Assume that the boundary, $\partial\Omega$, contains a distinguished region (or a collection of regions), $\partial\Omega_{\text{T}}\subset\partial\Omega$, which we refer to as the target. Let $\partial\Omega_{\text{R}}=\partial\Omega\backslash\partial\Omega_{\text{T}}$ denote the rest of the boundary. 
We assume that $\partial\Omega_{\text{T}}$ and $\partial\Omega_{\text{R}}$ are smooth and that $\partial\Omega_{\text{T}}$ is nonempty and relatively open in $\partial\Omega$.

Consider a set of $N$ independent particles that diffuse in $\Omega\subset\R^{3}$ with diffusivity $D>0$ and reflect from $\partial\Omega$. Suppose the particles are initially placed at $\x_{0}\in\Omega$ and assume there exists a ``target point'' $\x_{\T}\in\partial\Omega_{\text{T}}$ so that the straight line path from $\x_{0}$ to $\x_{\T}$ does not intersect $\partial\Omega$. That is, assume that the following straight line lies entirely in $\Omega$,
\begin{align}\label{sl}
\{(1-s)\x_{0}+s\x_{\text{T}}\in\R^{3}:s\in[0,1)\}
\subset\Omega.
\end{align}
We note that \eqref{sl} ensures that there are no ``obstacles'' in the straight line path between $\x_{0}$ and $\x_{\T}$. 

Without loss of generality, we choose a Cartesian coordinate system $(x,y,z)$ so that (i) the target point is the origin,
\begin{align*}
\x_{\T}=(0,0,0),
\end{align*}
and (ii) the initial particle location is directly ``above'' the origin in the $z$ direction,
\begin{align*}
\x_{0}=(0,0,z_{0})\quad\text{with }z_{0}:=\|\x_{0}-\x_{\text{T}}\|>0.
\end{align*}
{ See Figure}~\ref{figcyl}{b for an illustration.}

Let
\begin{align*}
\X_{n}(t)=(X_{n}(t),Y_{n}(t),Z_{n}(t))
\in\overline{\Omega}\subset\R^{3},\quad n\in\{1,\dots,N\},
\end{align*}
denote the position of the $n$-th particle at time $t\ge0$. Define the first time that the $n$-th particle reaches the target,
\begin{align*}
\tau_{n}
:=\inf\{t>0:\X_{n}(t)\in\partial\Omega_{\text{T}}\},\quad n\in\{1,\dots,N\},
\end{align*}
and the first time any particle reaches the target,
\begin{align}\label{T}
T_{\threed}:=\min_{n}\{\tau_{n}\}.
\end{align}

For large $N$, we prove that the $m$-th moment of the FPT \eqref{T} of the fastest particle in this general 3d geometry is bounded above by the $m$-th moment of the FPT in a corresponding 1d system, for any moment $m\ge1$. That is, the following theorem ensures that $\E[(T_{\threed})^{m}]$ cannot be greater than $\E[(T_{\oned}(z_{0}))^{m}]$ for large $N$, where $z_{0}$ is the distance between $\x_{0}$ and any point $\x_{\T}\in\partial\Omega_{\T}$, assuming the line from $\x_{0}$ to $\x_{\T}$ lies in $\Omega$ (see \eqref{sl}).

\begin{theorem}\label{theorem upper}
Let $T_{\threed}$ be the FPT in \eqref{T} for the general 3d geometry described above. Let $T_{\oned}(z_{0})$ be the FPT defined in \eqref{Tz} for a 1d system, where $z_{0}=\|\x_{0}-\x_{\T}\|$ is the distance between the initial particle location $\x_{0}$ and any point $\x_{\T}$ in the target $\partial\Omega_{\T}$, assuming \eqref{sl} holds. Then, for any moment $m\ge1$, we have that
\begin{align*}
\limsup_{N\to\infty}\frac{\E[(T_{\threed})^{m}]}{\E[(T_{\oned}(z_{0}))^{m}]}
=\limsup_{N\to\infty}\left(\frac{z_{0}^{2}}{4D\log N}\right)^{-m}\E[(T_{\threed})^{m}]
\le 1.
\end{align*}
\end{theorem}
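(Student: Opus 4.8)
The plan is to bound $\E[(T_{\threed})^{m}]$ from above by controlling the fastest particle directly, rather than through the pathwise identity $T_{\threed}=T_{\oned}$ available in the cylinder of section~\ref{section cylinder}. The crux is a short-time lower bound on the single-particle hitting probability: I would show there is a constant $q>0$ so that, for every sufficiently small fixed tube radius $a>0$ and all small $t>0$,
\begin{align}\label{plan-key}
\P(\tau_{1}\le t)\ \ge\ q\,\frac{\sqrt{t}}{\zeta_{a}}\,\exp\Big(-\frac{\zeta_{a}^{2}}{4Dt}\Big),\qquad \zeta_{a}:=z_{0}+O(a)\ \xrightarrow[a\to0]{}\ z_{0}.
\end{align}
Granting \eqref{plan-key}, the conclusion follows from a standard extreme-value computation. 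Since the $N$ FPTs are i.i.d., $\P(T_{\threed}>t)=(1-\P(\tau_{1}\le t))^{N}\le\exp(-N\,\P(\tau_{1}\le t))$, so choosing $t_{N}\sim \zeta_{a}^{2}/(4D\log N)$ (with a $\log\log N$ correction so that $N\,\P(\tau_{1}\le t_{N})\to\infty$) makes $\P(T_{\threed}>t_{N})$ decay faster than any power of $1/\log N$. Writing $\E[(T_{\threed})^{m}]=\int_{0}^{\infty}m t^{m-1}\P(T_{\threed}>t)\,\dd t$, the contribution of $[0,t_{N}]$ is at most $t_{N}^{m}$, while the tail is negligible (for $t$ bounded away from $0$, $\P(\tau_{1}\le t)$ is bounded below, so $\P(T_{\threed}>t)$ decays geometrically in $N$, and $\E[\tau_{1}^{m}]<\infty$ since $\Omega$ is bounded). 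Hence $\limsup_{N}(z_{0}^{2}/(4D\log N))^{-m}\E[(T_{\threed})^{m}]\le(\zeta_{a}/z_{0})^{2m}$ for each fixed small $a$, and letting $a\to0$ gives the bound $\le1$; the first equality in the theorem is just \eqref{ball} with $d=1$, which gives $\E[(T_{\oned}(z_{0}))^{m}]\sim(z_{0}^{2}/(4D\log N))^{m}$.

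The geometry enters in establishing \eqref{plan-key}. First I would couple each reflecting particle $\X_{n}$ with a free (unreflected) Brownian motion $\widehat{\X}_{n}$ started at $\x_{0}$, so that the two coincide up to the first time $\widehat{\X}_{n}$ reaches $\partial\Omega$; consequently, if the free path reaches $\partial\Omega_{\T}$ as its first contact with $\partial\Omega$, then $\X_{n}$ is absorbed at the same instant and $\tau_{n}$ is at most that time. Using the straight-line hypothesis \eqref{sl} and smoothness of $\partial\Omega$ near $\x_{\T}=0$, I would write the boundary locally as a graph $z=\phi(x,y)$ with $\Omega=\{z>\phi\}$ and $\phi(0)=0$, and fit, for all small $a>0$, a thin straight tube
\begin{align*}
R_{a}:=\{x^{2}+y^{2}<a^{2}\}\cap\Omega\cap\{z<z_{0}+h_{0}\}\ \subset\ \Omega,
\end{align*}
whose only contact with $\partial\Omega$ is the bottom graph-disk, which lies in $\partial\Omega_{\T}$ because $\partial\Omega_{\T}$ is relatively open. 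The good event is that $\widehat{\X}_{n}$ descends to the target through $R_{a}$: its transverse coordinate stays within radius $a$, and its axial coordinate $\widehat{Z}_{n}$ drops from $z_{0}$ to below the graph (a distance $\zeta_{a}=z_{0}+O(a)$, the $O(a)$ accounting for the tilt $|\nabla\phi(0)|\,a$) before exceeding height $z_{0}+h_{0}$, all within time $t$. On this event the first boundary hit is on the target, so $\tau_{n}\le t$.

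By independence of the axial and transverse components of free Brownian motion, the probability of the good event factorizes, and \eqref{plan-key} reduces to two short-time estimates exactly of the type in Lemmas~\ref{bound z} and~\ref{bound radial}: the axial factor is the probability that a one-dimensional Brownian motion travels distance $\zeta_{a}$ in time $t$, which is $\sim c\sqrt{t}\,\zeta_{a}^{-1}\exp(-\zeta_{a}^{2}/(4Dt))$, and the transverse factor is the probability that a two-dimensional Brownian motion stays within radius $a$ up to time $t$, which tends to $1$ as $t\to0$ for fixed $a$. The one subtlety is the upper barrier at $z_{0}+h_{0}$ (needed to keep the path inside $R_{a}\subset\Omega$): a reflection-principle calculation shows that forbidding the axial motion from rising by $h_{0}$ costs only a factor $1-\exp(-(h_{0}^{2}+h_{0}\zeta_{a})/(Dt))\to1$, so it does not affect the leading exponent.

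I expect the main obstacle to be this geometric step: verifying that for all small $a$ the tube $R_{a}$ lies in $\Omega$ with its sole boundary contact inside $\partial\Omega_{\T}$, and that the descent distance is $z_{0}+O(a)$ with $O(a)\to0$, so that the tilt of $\partial\Omega$ at $\x_{\T}$ contributes only to the \emph{exponent} $\zeta_{a}$ and not as an additive waiting time. This is precisely what lets me keep $a$ fixed as $N\to\infty$ and send $a\to0$ only afterward; an approach that instead splits the journey into a ``reach height $a$'' phase and a separate ``cross to the target'' phase fails, because the second phase contributes an additive time of order $a^{2}$ that would dominate $t_{N}\asymp 1/\log N$. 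A secondary point requiring care is the free-versus-reflected coupling near absorption, ensuring that the good event genuinely forces the first contact with $\partial\Omega$ to occur on $\partial\Omega_{\T}$ rather than on $\partial\Omega_{\textup{R}}$.
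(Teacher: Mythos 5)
Your proposal is correct, and it reaches the theorem by a genuinely different route than the paper. The paper never estimates $\P(\tau_{1}\le t)$ from below: it fits a thin cylinder $\Sigma_{r,b,c}$ around the segment \eqref{sl}, replaces each $Z_{n}$ by an auxiliary process $\Z_{n}$ that diffuses freely after the particle first hits the target or the cylinder's side or top, and introduces the penalized stopping time $\tau_{n}^{+}$ (equal to $\wtau_{n,\bot}$ if the bottom is reached first, and to $\min\{\tau_{n,\side},\wtau_{n,\top}\}+1$ otherwise). Lemma~\ref{lemma reflect} shows $\tau_{1}$ is stochastically dominated by $\tau_{1}^{+}$, so $\limsup_{N}\E[(T_{\threed})^{m}]/\E[(T_{+})^{m}]\le1$, and Lemma~\ref{lemma same} reruns the section~\ref{section cylinder} argument on the fastest index $\nf$ (controlling side and top excursions with precisely your two short-time estimates, Lemmas~\ref{bound radial} and~\ref{bound z}) to get $\E[(T_{+})^{m}]\sim\E[(T_{\oned}(z_{0}+|b|))^{m}]$; letting $r\to0$ forces $b\to0$, and \eqref{ball} supplies the 1d asymptotics. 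You instead prove the single-particle short-time lower bound $\P(\tau_{1}\le t)\ge q\sqrt{t}\,\zeta_{a}^{-1}e^{-\zeta_{a}^{2}/(4Dt)}$ by coupling with a free Brownian motion in a tube, then finish in one shot via $\P(T_{\threed}>t)\le e^{-N\P(\tau_{1}\le t)}$ and the layer-cake formula — no auxiliary process, no fastest index, no density manipulations for the 1d minimum. Both arguments rest on the same tube geometry and the same axial/transverse reflection-principle estimates, and both invoke \eqref{ball} with $d=1$ only for the displayed equality. What yours buys is portability: the hitting-probability-plus-exponential-bound mechanism is essentially how the authors' follow-up \cite{lawley2019uni} later removes the geometric hypotheses entirely. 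What the paper's pathwise comparison buys is that it reuses the cylinder machinery of section~\ref{section cylinder} verbatim and avoids your two-barrier first-passage calculation, at the price of the somewhat fiddly definitions of $\Z_{n}$ and $\tau_{n}^{+}$. Your quantitative bookkeeping is sound throughout: the upper-barrier correction $1-e^{-(h_{0}^{2}+h_{0}\zeta_{a})/(Dt)}$ matches the reflection computation, the $\log\log N$ correction in $t_{N}$ makes $N\P(\tau_{1}\le t_{N})\to\infty$ while $t_{N}^{m}$ keeps the right leading order, the tail is geometric in $N$ with $\E[\tau_{1}^{m}]<\infty$, and the order of limits ($N\to\infty$ first, then $a\to0$) is handled explicitly.

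One caveat, offered for fairness rather than as a defect relative to the paper: your local graph representation $z=\phi(x,y)$ with $\Omega=\{z>\phi\}$ near $\x_{\T}$ tacitly assumes the segment \eqref{sl} meets $\partial\Omega$ non-tangentially. Smoothness plus \eqref{sl} does not rule out tangential contact (e.g.\ a boundary locally $\{x=z^{2}\}$ with $\Omega\supset\{x<z^{2}\}$), in which case $\Omega$ leaks below the target inside any thin tube and your good event no longer forces first contact on $\partial\Omega_{\T}$. But the paper's proof carries the same implicit genericity — its claim in Lemma~\ref{lemma reflect} that $\tau_{1}\le\tau_{1}^{+}$ on $B$ likewise needs the target to block downward passage through the thin cylinder — so your proposal meets the paper's own standard of rigor, and you correctly identified the geometric step as the crux.
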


To find the exact asymptotic behavior of $\E[(T_{\threed})^{m}]$ (rather than the upper bound in Theorem~\ref{theorem upper}), let $\x_{\T}^{*}\in\overline{\partial\Omega_{\T}}$ minimize the distance to $\x_{0}$. That is, assume $\x_{\T}^{*}\in\overline{\partial\Omega_{\T}}$ is such that
\begin{align}\label{min}
\|\x_{0}-\x_{\T}^{*}\|
\le\|\x_{0}-\x_{\T}\|\quad\text{for all }\x_{\T}\in\partial\Omega_{\T}.
\end{align}
Further, assume \eqref{sl} holds with $\x_{\T}$ replaced by $\x_{\T}^{*}$,
\begin{align}\label{slstar}
\{(1-s)\x_{0}+s\x_{\T}^{*}\in\R^{3}:s\in[0,1)\}
\subset\Omega.
\end{align}
In addition, letting
\begin{align}\label{zd}
z_{0}
:=\|\x_{0}-\x_{\T}^{*}\|,
\end{align}
we also assume that the region
\begin{align}\label{S}
S
:=\{\x\in\R^{3}:\|\x-\x_{0}\|< z_{0}\}\cap\Omega\in\R^{3}
\end{align}
is a so-called star domain, meaning that the line from $\x_{0}$ to $\x$ lies in $S$ for all $\x\in S$,
\begin{align}\label{star}
\{(1-s)\x_{0}+s\x\in\R^{3}:s\in[0,1]\}
\subset S\quad\text{for all }\x\in S.
\end{align}
{We note that assuming the domain $\Omega$  is convex ensures that }\eqref{star}{ holds (though }\eqref{star}{ can hold even if $\Omega$ is not convex).}

\begin{theorem}\label{theorem all}
If \eqref{min}-\eqref{star} are satisfied, then for any moment $m\ge1$ we have that
\begin{align*}
\E[(T_{\threed})^{m}]
\sim\E[(T_{\oned}(z_{0}))^{m}]
\sim\Big(\frac{z_{0}^{2}}{4D\log N}\Big)^{m}
\quad\text{as }N\to\infty.
\end{align*}
\end{theorem}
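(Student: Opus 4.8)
The plan is to prove matching upper and lower asymptotic bounds for $\E[(T_{\threed})^{m}]$ against $(z_{0}^{2}/(4D\log N))^{m}$, where now $z_{0}=\|\x_{0}-\x_{\T}^{*}\|$ is the \emph{minimal} distance from $\x_{0}$ to the target. The upper bound comes essentially for free from Theorem~\ref{theorem upper}: applying that theorem with the choice $\x_{\T}=\x_{\T}^{*}$—legitimate because \eqref{slstar} is exactly the straight-line hypothesis \eqref{sl} for $\x_{\T}^{*}$—yields $\limsup_{N\to\infty}(z_{0}^{2}/(4D\log N))^{-m}\E[(T_{\threed})^{m}]\le 1$. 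Since \eqref{ball} gives $\E[(T_{\oned}(z_{0}))^{m}]\sim(z_{0}^{2}/(4D\log N))^{m}$, all the asserted $\sim$ relations follow once I establish the reverse inequality $\liminf_{N\to\infty}(z_{0}^{2}/(4D\log N))^{-m}\E[(T_{\threed})^{m}]\ge 1$. The rest of the work is this lower bound.

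First I would reduce to a radial escape problem, which is where hypotheses \eqref{min} and \eqref{star} enter. Set $\rho_{n}(t):=\|\X_{n}(t)-\x_{0}\|$, let $\sigma_{n}:=\inf\{t>0:\rho_{n}(t)=z_{0}\}$ be the first time the $n$-th particle exits the ball $B:=\{\x:\|\x-\x_{0}\|<z_{0}\}$, and write $T^{\rho}:=\min_{n}\sigma_{n}$. By \eqref{min} every target point lies at distance at least $z_{0}$ from $\x_{0}$, so to reach $\partial\Omega_{\T}$ a particle must first exit $B$; hence $\sigma_{n}\le\tau_{n}$ and $T^{\rho}\le T_{\threed}$ almost surely. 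It therefore suffices to show $\liminf_{N\to\infty}(z_{0}^{2}/(4D\log N))^{-m}\E[(T^{\rho})^{m}]\ge 1$, and since the $\sigma_{n}$ are i.i.d.\ the whole matter reduces to a sharp short-time upper bound on the single-particle escape probability $\P(\sigma_{1}\le t)$ as $t\to0$.

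The crux—and the step I expect to be the main obstacle—is proving a bound of the form $\P(\sigma_{1}\le t)\le C t^{-k}\exp(-z_{0}^{2}/(4Dt))$ for small $t$, with the \emph{exact} exponential rate $z_{0}^{2}/(4Dt)$; any larger constant than $4D$ in that denominator would destroy the matching of the $\log N$ scale, so a crude union bound over coordinate directions (which loses a factor of $d$ in the exponent) will not do. The natural tool is It\^o's formula applied to $\psi(\rho_{1}(t))e^{-\lambda t}$, where $\psi$ is the positive, increasing, regular-at-the-origin solution of the radial eigenvalue equation $D(\psi''+\tfrac{d-1}{\rho}\psi')=\lambda\psi$, i.e.\ the modified Bessel function $\psi(\rho)=\rho^{-(d-2)/2}I_{(d-2)/2}(\sqrt{\lambda/D}\,\rho)$. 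Inside $B$ the interior drift is annihilated by this choice of $\psi$, while the reflecting boundary contributes a boundary local-time term whose integrand is a positive multiple of $-\psi'(\rho_{1})\,(\X_{1}-\x_{0})\cdot\mathbf{n}$ (with $\mathbf{n}$ the outward normal). Here the star hypothesis \eqref{star} does the essential work: star-shapedness of $S$ about $\x_{0}$ forces $(\x-\x_{0})\cdot\mathbf{n}\ge0$ on $\partial\Omega\cap B$, and since $\psi'>0$ this term is nonpositive, making $\psi(\rho_{1}(t))e^{-\lambda t}$ a supermartingale. Optional stopping at $\sigma_{1}\wedge t$ then gives $\P(\sigma_{1}\le t)\le e^{\lambda t}\psi(0)/\psi(z_{0})$, and the large-argument asymptotics $\psi(z_{0})\sim(\text{poly})\,e^{\sqrt{\lambda/D}\,z_{0}}$ together with the optimal choice $\lambda=z_{0}^{2}/(4Dt^{2})$ produce precisely the rate $\exp(-z_{0}^{2}/(4Dt))$. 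The care required lies in justifying It\^o's formula for the radial semimartingale up to the reflecting boundary (the singularity of the $\psi$-derivatives at $\rho=0$ is harmless, since Brownian motion in dimension $d\ge2$ does not return to $\x_{0}$) and in controlling the polynomial prefactor uniformly.

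Finally I would assemble the extreme-value estimate exactly as in the proof of Theorem~\ref{theorem simple}. Fix $\eps\in(0,1)$ and set $t_{N}:=(1-\eps)z_{0}^{2}/(4D\log N)$, so that $\exp(-z_{0}^{2}/(4Dt_{N}))=N^{-1/(1-\eps)}$ and the single-particle bound gives $\P(\sigma_{1}\le t_{N})\le C(\log N)^{k}N^{-1/(1-\eps)}$. Then $\P(T^{\rho}>t_{N})=(1-\P(\sigma_{1}\le t_{N}))^{N}\ge(1-C(\log N)^{k}N^{-1/(1-\eps)})^{N}\to1$, because $N\cdot C(\log N)^{k}N^{-1/(1-\eps)}\to0$ for $\eps>0$. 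Consequently $\E[(T^{\rho})^{m}]\ge t_{N}^{m}\,\P(T^{\rho}>t_{N})$ yields $\liminf_{N\to\infty}(z_{0}^{2}/(4D\log N))^{-m}\E[(T^{\rho})^{m}]\ge(1-\eps)^{m}$, and letting $\eps\downarrow0$ gives the desired lower bound. Combined with the upper bound from Theorem~\ref{theorem upper}, this proves $\E[(T_{\threed})^{m}]\sim(z_{0}^{2}/(4D\log N))^{m}$, and the identical argument applies in the two-dimensional case, and more generally for any $d\ge2$, completing the proof.
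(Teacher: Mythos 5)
Your proposal is correct, and its upper-bound half coincides with the paper's proof: both simply invoke Theorem~\ref{theorem upper} with $\x_{\T}=\x_{\T}^{*}$, which \eqref{slstar} makes legitimate, and both use \eqref{ball} to convert the resulting bound into the stated $\sim$ chain. The lower bound, however, takes a genuinely different route. The paper's Lemma~\ref{lemma ball} argues pathwise: writing $\dd R_{n}(t)=\dd R_{n}^{0}(t)+\n_{R}(\X_{n}(t))\,\dd L_{n}(t)$ and using the star condition \eqref{star} to conclude $\n_{R}\le0$ on the relevant part of $\partial\Omega$, it dominates the reflected radial process by a \emph{free} 3d Bessel process, so that $T_{\ball}^{0}\le T_{\threed}$ almost surely and the moments compare for every fixed $N$; the hard asymptotics $\E[(T_{\ball}^{0})^{m}]\sim(z_{0}^{2}/(4D\log N))^{m}$ are then outsourced to the known hypersphere result \eqref{ball} of \cite{yuste2001}. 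You instead prove the sharp short-time tail bound $\P(\sigma_{1}\le t)\le Ct^{-k}e^{-z_{0}^{2}/(4Dt)}$ directly for the reflected radial exit time from the ball, via the eigenfunction supermartingale $\psi(\rho_{1}(t))e^{-\lambda t}$ with $\psi(\rho)=\rho^{-(d-2)/2}I_{(d-2)/2}(\sqrt{\lambda/D}\,\rho)$, optional stopping, and the optimization $\lambda=z_{0}^{2}/(4Dt^{2})$, and you then redo the extreme-value computation by hand with $t_{N}=(1-\eps)z_{0}^{2}/(4D\log N)$. Note that both arguments use \eqref{min} identically (a particle must exit the ball of radius $z_{0}$ before reaching $\partial\Omega_{\T}$, the paper's $\tau_{n,\ball}\le\tau_{n}$) and use \eqref{star} at exactly the same spot and for exactly the same purpose: the radial component of the normal has the sign that makes the local-time (reflection) term push the radial coordinate inward, i.e.\ your nonpositive boundary term is the paper's $\n_{R}\le0$. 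What the paper's route buys is brevity: stochastic domination of the full FPT replaces your tail estimate at a single scale $t_{N}$, and the eigenfunction analysis is replaced by a citation. What your route buys is self-containedness and explicitness: you obtain the exact rate $e^{-z_{0}^{2}/(4Dt)}$ in any dimension $d\ge2$ without appealing to \cite{yuste2001} for the lower bound (that reference is still needed only to identify the 1d moment asymptotics in the final $\sim$ chain). The technical points you flag are the right ones and are routine to discharge: since $\psi$ is a power series in $\rho^{2}$, the map $\x\mapsto\psi(\|\x-\x_{0}\|)$ is smooth, so one applies It\^o's formula to this function of $\X_{1}$ rather than to the singular radial SDE at $\rho=0$, and the supermartingale is bounded on $[0,\sigma_{1}\wedge t]$, so optional stopping applies.
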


\subsection{Some lemmas and the proofs of Theorems~\ref{theorem upper} and \ref{theorem all}}

The proof of Theorem~\ref{theorem upper} generalizes the methods developed in Section~\ref{section cylinder}. The essential idea is to show that the fastest particle does not leave a thin cylinder encapsulating the straight line from $\x_{0}$ to $\x_{\T}$ in \eqref{sl}.

Let $\Sigma_{r,b,c}$ denote the following cylinder centered on the $z$-axis with radius $r>0$, bottom at $z=b$, and top at $z=c>b$,
\begin{align}\label{cyl}
\Sigma_{r,b,c}
:=\{(x,y,z)\in\R^{3}:x^{2}+y^{2}< r^{2},\,z\in(b,c)\}.
\end{align}
By taking the radius $r>0$ sufficiently small, we choose $c$ and $b$ so that
\begin{align*}
b\le0<z_{0}<c,\quad
\Sigma_{r,z_{0},c}\subset\Omega,\quad\text{and}\quad
\Sigma_{r,b,c}\cap\partial\Omega\subset\partial\Omega_{\T}.
\end{align*}
Furthermore, we choose $b$ so that the bottom of the cylinder is completely ``below'' the target, meaning
\begin{align*}
b\le\inf\{z:(x,y,z)\in\Sigma_{r,b,z_{0}}\cap\partial\Omega_{\T}\}.
\end{align*}

Define the first time that the $n$-th particle hits either the sides or the top of the cylinder $\Sigma_{r,b,c}$,
\begin{align*}
\tau_{n,\side}
&:=\inf\{t>0:X_{n}^{2}(t)+Y_{n}^{2}(t)=r^{2}\},\\
\tau_{n,\top}
&:=\inf\{t>0:Z_{n}(t)=c\}.
\end{align*}
To bound $T_{\threed}$, we define a new process $\Z_{n}(t)$ which is equal to $Z_{n}(t)$ before the $n$-th particle hits either the target or escapes the cylinder $\Sigma_{r,b,c}$, and then diffuses independently. Specifically, define the stopping time
\begin{align*}
\tau_{n,\esc}
:=\min\{\tau_{n},\tau_{n,\side},\tau_{n,\top}\},
\end{align*}
and
\begin{align*}
\Z_{n}(t)
=\begin{cases}
Z_{n}(t) & t\le\tau_{n,\esc},\\
Z_{n}(\tau_{n,\esc})+\sqrt{2D}W_{n}(t) & t>\tau_{n,\esc},
\end{cases}
\end{align*}
where $W_{n}$ denotes an independent standard Brownian motion. Next, define the first time that $\Z_{n}$ hits the top or bottom of the cylinder $\Sigma_{r,b,c}$,
\begin{align*}
\wtau_{n,\top}
&:=\inf\{t>0:\Z_{n}(t)=c\},\\
\wtau_{n,\bot}
&:=\inf\{t>0:\Z_{n}(t)=b\}.
\end{align*}
Next, define
\begin{align*}
\tau_{n}^{+}
:=\begin{cases}
\wtau_{n,\bot} & \wtau_{n,\bot}<\min\{\tau_{n,\side},\wtau_{n,\top}\},\\
\min\{\tau_{n,\side},\wtau_{n,\top}\}+1 & \text{otherwise}.
\end{cases}
\end{align*}
Further, define
\begin{align*}
T_{+}
:=\min_{n}\{\tau_{n}^{+}\}.
\end{align*}

The next lemma shows that the $m$-th moment of $T$ is bounded above by the $m$-th moment of $T_{+}$ for sufficiently large $N$.
\begin{lemma}\label{lemma reflect}
If $m\ge1$, then
\begin{align*}
\limsup_{N\to\infty}\frac{\E[(T_{\threed})^{m}]}{\E[(T_{+})^{m}]}\le1.
\end{align*}
\end{lemma}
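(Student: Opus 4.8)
The plan is to mirror the proof of Theorem~\ref{theorem simple}, with the role of the radial non-escape event $A$ now played by the event that the fastest particle (in the $T_+$ sense) reaches the bottom $z=b$ of the cylinder $\Sigma_{r,b,c}$ ``legitimately,'' i.e.\ before escaping through its side or top. The engine of the argument is a pathwise implication, and the rest is the same soft splitting-and-Cauchy--Schwarz bookkeeping used before.

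First I would establish the key pathwise fact: for each $n$, on the event $\{\tau_n^{+}=\wtau_{n,\bot}\}=\{\wtau_{n,\bot}<\min\{\tau_{n,\side},\wtau_{n,\top}\}\}$ the honest particle has already reached the target, so that $\tau_n\le\wtau_{n,\bot}=\tau_n^{+}$. The argument is by contradiction: suppose $\tau_n>\wtau_{n,\bot}$. Escape at $\tau_{n,\esc}$ through the side is ruled out by $\wtau_{n,\bot}<\tau_{n,\side}$, and escape through the top is ruled out because $Z_n(\tau_{n,\top})=c$ would force $\wtau_{n,\top}\le\tau_{n,\esc}\le\wtau_{n,\bot}$, contradicting $\wtau_{n,\bot}<\wtau_{n,\top}$; the remaining possibility $\tau_{n,\esc}>\wtau_{n,\bot}$ forces $Z_n(\wtau_{n,\bot})=\Z_n(\wtau_{n,\bot})=b$ (since $\Z_n$ agrees with $Z_n$ up to $\tau_{n,\esc}$) while the particle is still inside the cylinder radially. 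But the geometric conditions on $b$ (the cylinder bottom lies below the lowest point of the target within the cylinder, and $\Sigma_{r,b,c}\cap\partial\Omega\subset\partial\Omega_{\T}$) imply that the path cannot descend to height $b$ inside the cylinder without first crossing $\partial\Omega_{\T}$, contradicting $\tau_n>\wtau_{n,\bot}$. I expect this geometric step to be the main obstacle, since it is the one place the specific choices of $r,b,c$ enter.

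Next, let $n_{+}$ be the (a.s.\ unique) index achieving $T_+=\tau_{n_{+}}^{+}$, and set $G:=\{\tau_{n_{+}}^{+}=\wtau_{n_{+},\bot}\}$, the event that $n_{+}$ is of the legitimate type. On $G$, the pathwise fact gives $T_{\threed}\le\tau_{n_{+}}\le\tau_{n_{+}}^{+}=T_+$. I would then split $\E[(T_{\threed})^{m}]=\E[(T_{\threed})^{m}1_G]+\E[(T_{\threed})^{m}1_{G^{c}}]$, bounding the first term by $\E[(T_+)^{m}1_G]\le\E[(T_+)^{m}]$ and, using $T_{\threed}\le\tau_1$ together with Cauchy--Schwarz exactly as in the proof of Theorem~\ref{theorem simple}, the second by $\sqrt{\P(G^{c})}\sqrt{\E[\tau_1^{2m}]}$ with $\E[\tau_1^{2m}]<\infty$ since the domain is bounded. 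Dividing by $\E[(T_+)^{m}]$ gives $\E[(T_{\threed})^{m}]/\E[(T_+)^{m}]\le 1+\sqrt{\P(G^{c})\,\E[\tau_1^{2m}]}/\E[(T_+)^{m}]$, so it remains to show the last fraction vanishes.

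For the numerator, since every penalty-type particle satisfies $\tau_n^{+}\ge 1$, the event $G^{c}$ forces \emph{every} particle to be ``bad'' in the sense $\wtau_{n,\bot}\ge\min\{\tau_{n,\side},\wtau_{n,\top},1\}$; by independence across particles, $\P(G^{c})\le p^{N}$ with $p:=\P(\wtau_{1,\bot}\ge\min\{\tau_{1,\side},\wtau_{1,\top},1\})<1$, the strict inequality holding because a single particle has positive probability of running straight down the cylinder to height $b$ within time $1$. For the denominator, writing $\wtau_{\bot}^{\min}:=\min_n\wtau_{n,\bot}$, one has $T_+\ge\wtau_{\bot}^{\min}$ on $\{\wtau_{\bot}^{\min}\le1\}$, and since each $\Z_n$ is a free Brownian motion started at $z_0$, $\wtau_{\bot}^{\min}$ has the same distribution as $T_{\oned}(z_0-b)$; hence \eqref{ball} yields $\E[(T_+)^{m}]\gtrsim\big(\tfrac{(z_0-b)^{2}}{4D\log N}\big)^{m}$, which decays only like a power of $1/\log N$. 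As $\sqrt{p^{N}}$ decays exponentially in $N$, the ratio $\sqrt{\P(G^{c})}/\E[(T_+)^{m}]\to0$, establishing $\limsup_{N\to\infty}\E[(T_{\threed})^{m}]/\E[(T_+)^{m}]\le1$.
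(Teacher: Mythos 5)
Your proof is correct, but it takes a genuinely different route from the paper's. The paper works at the level of single-particle survival functions: it defines $B:=\{\tau_{1}<\min\{\tau_{1,\side},\tau_{1,\top}\}\}$, shows $\P(\tau_{1}>t)\le\P(\tau_{1}^{+}>t)$ for all $t<1$ (on $B^{c}$ the penalty in the definition of $\tau_{1}^{+}$ forces $\tau_{1}^{+}\ge1$), then uses the identity $\E[T_{\threed}^{m}]=\int_{0}^{\infty}(\P(\tau_{1}>t^{1/m}))^{N}\,\dd t$ split at $t=1$, bounding the tail by $(\P(\tau_{1}>1))^{N}$ times a finite integral, and finally invokes Lemma~\ref{lemma same} to conclude that $\E[(T_{+})^{m}]\asymp(\log N)^{-m}$ decays slower than this exponential term. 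You instead mirror the event-splitting scheme of Theorem~\ref{theorem simple}: a pathwise implication (on $\{\tau_{n}^{+}=\wtau_{n,\bot}\}$ one has $\tau_{n}\le\tau_{n}^{+}$, proved by the case analysis on $\tau_{n,\esc}$), a decomposition over the random fastest index $n_{+}$ via the event $G$, Cauchy--Schwarz on $G^{c}$ with $\E[\tau_{1}^{2m}]<\infty$, the independence bound $\P(G^{c})\le p^{N}$ with $p<1$, and a self-contained denominator bound via $T_{+}\ge\min\{\min_{n}\wtau_{n,\bot},1\}$ together with \eqref{1d}--\eqref{ball}, thereby avoiding any appeal to Lemma~\ref{lemma same}. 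What each buys: the paper's survival-function comparison is shorter and sidesteps the random index entirely, but its two key claims (``it is immediate that'' $\tau_{1}\le\tau_{1}^{+}$ on $B$, and $\tau_{1}^{+}\ge1$ on $B^{c}$) silently rest on the same geometric separation property you correctly isolate as the crux, namely that a particle radially inside the thin cylinder cannot reach height $b$ without first crossing $\partial\Omega_{\T}$; your version makes this explicit and is logically self-contained with respect to Lemma~\ref{lemma same}. Two small points you should tighten if writing this up: the identification of each $\Z_{n}$ as a free Brownian motion from $z_{0}$ (used for $\min_{n}\wtau_{n,\bot}\overset{d}{=}T_{\oned}(z_{0}+|b|)$) needs the observation that the particle cannot touch the reflecting boundary before $\tau_{n,\esc}$ since $\Sigma_{r,b,c}\cap\partial\Omega\subset\partial\Omega_{\T}$, so $Z_{n}$ is unreflected Brownian motion up to the pasting time; and the asserted bound $\E[(T_{+})^{m}]\gtrsim\big(\tfrac{(z_{0}-b)^{2}}{4D\log N}\big)^{m}$ requires disposing of the truncation to $\{\min_{n}\wtau_{n,\bot}\le1\}$, e.g.\ by noting $\E\big[(\min_{n}\wtau_{n,\bot})^{m}1_{\{\min_{n}\wtau_{n,\bot}>1\}}\big]$ is exponentially small in $N$ by Cauchy--Schwarz, which is routine.
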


Next, we use the methods of Section~\ref{section cylinder} to show that the large $N$ behavior of $\E[T_{+}]$ is unchanged if we make the particles reflect at the sides and top of the cylinder. The proof is similar to the argument of Section~\ref{section cylinder}.
\begin{lemma}\label{lemma same}
If $m\ge1$, then
\begin{align*}
\E[(T_{+})^{m}]
\sim\E[(T_{\oned}(z_{0}+|b|))^{m}]
\sim\Big(\frac{(z_{0}+|b|)^{2}}{4D\log N}\Big)^{m}\quad\text{as }N\to\infty.
\end{align*}
\end{lemma}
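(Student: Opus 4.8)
The plan is to compare $T_{+}$ to the auxiliary quantity $M:=\min_n\wtau_{n,\bot}$, the first time any of the processes $\Z_n$ reaches the bottom $z=b$. Before $\tau_{n,\esc}$ the process $\Z_n$ coincides with the unreflected height coordinate inside $\Sigma_{r,b,c}$ (there is no reflection, since $\Sigma_{r,b,c}\cap\partial\Omega\subset\partial\Omega_{\T}$), and after $\tau_{n,\esc}$ it continues as an independent Brownian motion; thus each $\Z_n$ is a free one-dimensional Brownian motion started at $z_0$, and $\wtau_{n,\bot}$ is its first passage time over the distance $z_0-b=z_0+|b|$. Hence $M\overset{d}{=}T_{\oned}(z_0+|b|)$, so by \eqref{ball} it suffices to prove $\E[(T_{+})^{m}]\sim\E[M^{m}]$. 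The driving observation is the pair of almost sure inequalities $\min\{\wtau_{n,\bot},1\}\le\tau_n^{+}\le\wtau_{n,\bot}+1$: in the success case $\tau_n^{+}=\wtau_{n,\bot}$, while in the penalized case $\min\{\tau_{n,\side},\wtau_{n,\top}\}\le\wtau_{n,\bot}$ gives the upper bound and $\tau_n^{+}\ge1$ gives the lower bound. Minimizing over $n$ yields $\min\{M,1\}\le T_{+}\le M+1$ almost surely.

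The lower bound is immediate from $T_{+}\ge\min\{M,1\}$: since $(\min\{M,1\})^{m}\ge M^{m}-M^{m}\mathbf 1_{\{M>1\}}$ and $\P(M>1)=(S_0(1))^{N}$ decays exponentially (with $S_0$ the single-particle survival function), Cauchy--Schwarz gives $\E[M^{m}\mathbf 1_{\{M>1\}}]\le\sqrt{\E[M^{2m}]\,\P(M>1)}$, which is negligible relative to $\E[M^{m}]\sim\big((z_0+|b|)^2/(4D\log N)\big)^{m}$ because the latter decays only polynomially in $1/\log N$. Hence $\liminf_N\E[(T_{+})^{m}]/\E[M^{m}]\ge1$.

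For the upper bound I mirror the proof of Theorem~\ref{theorem simple}. Let $\widetilde{\nf}$ be the a.s. unique index with $\wtau_{\widetilde{\nf},\bot}=M$, and set $A:=\{\wtau_{\widetilde{\nf},\bot}<\min\{\tau_{\widetilde{\nf},\side},\wtau_{\widetilde{\nf},\top}\}\}\cap\{M\le1\}$. On $A$ the fastest particle reaches the bottom before escaping, so $\tau_{\widetilde{\nf}}^{+}=M$, and combined with $T_{+}\ge\min\{M,1\}=M$ this forces $T_{+}=M$; thus $T_{+}^{m}\mathbf 1_A=M^{m}\mathbf 1_A$. Following \eqref{start} and using $T_{+}\le M+1$ with Cauchy--Schwarz on $A^{c}$, one gets $\E[(T_{+})^{m}]/\E[M^{m}]=1-\E[M^{m}\mathbf 1_{A^{c}}]/\E[M^{m}]+\E[(T_{+})^{m}\mathbf 1_{A^{c}}]/\E[M^{m}]$, where $\E[(T_{+})^{m}\mathbf 1_{A^{c}}]\le\sqrt{\E[(M+1)^{2m}]\,\P(A^{c})}$ and $\E[(M+1)^{2m}]$ stays bounded. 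Everything therefore reduces to showing that $\P(A^{c})$ vanishes faster than every power of $1/\log N$.

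Bounding $\P(A^{c})\le\P(M>1)+\P(\tau_{\widetilde{\nf},\side}\le M)+\P(\wtau_{\widetilde{\nf},\top}\le M)$, the first term is exponentially small and the side term is exactly the estimate from Theorem~\ref{theorem simple}: the radial processes are independent of the heights that determine $\widetilde{\nf}$, so $\P(\tau_{\widetilde{\nf},\side}\le M)=\P(\tau_{1,\side}\le M)$ with $\tau_{1,\side}$ independent of $M$, and splitting $\int_0^{\delta}+\int_{\delta}^{\infty}$ at $\delta=(\log N)^{-1/2}$ and applying Lemmas~\ref{bound radial} and \ref{bound z} makes it super-polynomially small. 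The top term is the main obstacle, since $\wtau_{\widetilde{\nf},\top}$ and $M=\wtau_{\widetilde{\nf},\bot}$ are functions of the same path and cannot be decoupled. I would treat it by exchangeability, writing $\P(\wtau_{\widetilde{\nf},\top}\le M)=N\,\E\big[\mathbf 1_{\{\wtau_{1,\top}<\wtau_{1,\bot}\}}(S_0(\wtau_{1,\bot}))^{N-1}\big]$, and then by the strong Markov property at $\wtau_{1,\top}$ decomposing $\wtau_{1,\bot}=\wtau_{1,\top}+\sigma$ with $\sigma$ an independent first passage over the larger distance $c-b$. Brownian scaling---$\sigma$ has the law of $\kappa$ times a first passage over the distance $z_0-b$, with $\kappa=((c-b)/(z_0-b))^{2}>1$ strictly since $c>z_0$---then shows this expectation decays like $N^{1-\kappa}$ up to a logarithmic factor, i.e. faster than any power of $1/\log N$. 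Combining the three bounds yields $\P(A^{c})\to0$ at the required rate, hence $\limsup_N\E[(T_{+})^{m}]/\E[M^{m}]\le1$; together with the lower bound and \eqref{ball} this completes the proof.
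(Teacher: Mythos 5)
Your proof is correct, and its overall skeleton matches the paper's: compare $T_{+}$ with $M=\min_{n}\wtau_{n,\bot}\overset{d}{=}T_{\oned}(z_{0}+|b|)$, show the fastest-to-bottom particle exits through the bottom with overwhelming probability, control the complement by Cauchy--Schwarz, and reduce everything to $\sqrt{\P(A^{c})}(\log N)^{m}\to0$, with the side term handled exactly as in Section~\ref{section cylinder} at $\delta=(\log N)^{-1/2}$. Where you genuinely diverge is the top term, and your route is arguably more robust. The paper bounds $\P(\wtau_{\nf,\top}\le T_{\oned})\le\P(\wtau_{\nf,\top}\le\delta)+\P(T_{\oned}\ge\delta)$ and then asserts as ``immediate'' the random-index inequality $\P(\wtau_{\nf,\top}\le\delta)\le\P(\wtau_{1,\top}\le\delta)$; this is really a negative-correlation statement (the first particle to the bottom is downward-biased, hence less likely to have visited the top early), not a union bound --- indeed the union bound $N\P(\wtau_{1,\top}\le\delta)$ diverges at $\delta=(\log N)^{-1/2}$, and shrinking $\delta$ to order $1/\log N$ rescues it only when $(c-z_{0})^{2}>(z_{0}+|b|)^{2}$, which the geometry need not permit since $c$ is constrained by $\Sigma_{r,z_{0},c}\subset\Omega$. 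Your exchangeability identity $\P(\wtau_{\widetilde{\nf},\top}\le M)=N\,\E\big[\mathbf 1_{\{\wtau_{1,\top}<\wtau_{1,\bot}\}}(S_{0}(\wtau_{1,\bot}))^{N-1}\big]$ is exact (the event $\{\wtau_{1,\top}<\wtau_{1,\bot}\}$ is measurable at time $\wtau_{1,\top}$, and the other $N-1$ particles are independent), and the strong Markov decomposition with scaling exponent $\kappa=((c-b)/(z_{0}-b))^{2}>1$, valid since $c>z_{0}$ always holds by construction, replaces the paper's asserted inequality with a computation giving the much stronger rate $N^{1-\kappa}$ up to logarithms. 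Your sandwich $\min\{M,1\}\le T_{+}\le M+1$ and the inclusion of $\{M\le1\}$ in $A$ are likewise slightly more careful than the paper, whose claimed identity $A=\{T_{+}=T_{\oned}\}$ is literally correct only on $\{T_{\oned}\le1\}$ (a penalized particle could undercut $T_{\oned}$ when $T_{\oned}>1$); your version supplies the almost sure upper bound on $A^{c}$ that the Cauchy--Schwarz step needs. Two small repairs: (i) since $\wtau_{1,\bot}$ is a one-sided Brownian passage time with $S_{0}(t)\asymp t^{-1/2}$ as $t\to\infty$, $\E[M^{2m}]$ is finite only once $N>4m$; this is harmless because $M$ is stochastically decreasing in $N$, but state it before invoking Cauchy--Schwarz; (ii) in the general domain, the identity $\P(\tau_{\widetilde{\nf},\side}\le M)=\P(\tau_{1,\side}\le M)$ with independence is exact only before $\tau_{1,\esc}$ (after escape the true lateral motion is coupled to the true height, while $\Z_{1}$ continues independently), so a sentence reducing to the pre-escape free motion is needed --- though the paper glosses this point at exactly the same level when it says the side term ``can be handled by the same argument as in Section~\ref{section cylinder}.''
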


\begin{proof}[Proof of Theorem~\ref{theorem upper}]
Combining Lemmas~\ref{lemma reflect} and \ref{lemma same}, we have that
\begin{align*}
\limsup_{N\to\infty}\frac{\E[(T_{\threed})^{m}]}{\E[(T_{\oned}(z_{0}+|b|))^{m}]}
\le 1.
\end{align*}
Since we can {take} the bottom $b$ of the cylinder in \eqref{cyl} arbitrarily close to zero by taking the radius $r$ of the cylinder small, and by the asymptotic behavior in \eqref{ball}, the proof is complete.
\end{proof}

The next lemma bounds $T_{\threed}$ below by the escape time from a sphere of radius $z_{0}$.

\begin{lemma}\label{lemma ball}
Assume \eqref{min}-\eqref{star} are satisfied. For each $n\in\{1,\dots,N\}$, let $R_{n}^{0}(t)$ denote a 3d Bessel process satisfying
\begin{align*}
\dd R_{n}^{0}(t)
=\frac{2D}{R_{n}^{0}(t)}\,\dd t+\sqrt{2D}\,\dd W_{n}(t),\quad R_{n}^{0}(0)=0,
\end{align*}
where $\{W_{1},\dots,W_{N}\}$ are independent, standard one-dimensional Brownian motions. Define first time that $R_{n}^{0}$ hits radius $z_{0}>0$,
\begin{align*}
\tau_{n,\ball}^{0}
:=\inf\{t>0:R_{n}^{0}(t)\ge z_{0}\},
\end{align*}
and the first time any $R_{n}^{0}$ hits radius $z_{0}>0$,
\begin{align*}
T_{\ball}^{0}
:=\min_{n}\{\tau_{n,\ball}^{0}\}.
\end{align*}
Then
\begin{align*}
\P(T_{\ball}^{0}\ge t)\le \P(T_{\threed}\ge t)\quad\text{for each }t\ge0.
\end{align*}
\end{lemma}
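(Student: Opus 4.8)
The plan is to prove the stronger pathwise statement that, on a suitable coupling, $T_{\threed}\ge T_{\ball}^{0}$ almost surely; the distributional inequality $\P(T_{\ball}^{0}\ge t)\le\P(T_{\threed}\ge t)$ then follows since $\{T_{\ball}^{0}\ge t\}\subseteq\{T_{\threed}\ge t\}$. The starting observation is geometric: since $\x_{\T}^{*}$ realizes the minimal distance $z_{0}=\|\x_{0}-\x_{\T}^{*}\|$ from $\x_{0}$ to the target (see \eqref{min} and \eqref{zd}), any path reaching $\partial\Omega_{\T}$ must first attain radial distance $z_{0}$ from $\x_{0}$. Hence, writing $\rho_{n}(t):=\|\X_{n}(t)-\x_{0}\|$ and $\sigma_{n}:=\inf\{t>0:\rho_{n}(t)\ge z_{0}\}$, we have $\tau_{n}\ge\sigma_{n}$ almost surely, so that $T_{\threed}=\min_{n}\tau_{n}\ge\min_{n}\sigma_{n}$. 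It thus suffices to show $\sigma_{n}\ge\tau_{n,\ball}^{0}$ for each $n$ under a coupling that preserves the independence of the $R_{n}^{0}$.

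The core of the argument is to compare the radial part $\rho_{n}$ of the reflected motion with the free $3$d Bessel process $R_{n}^{0}$ on $[0,\sigma_{n}]$, during which $\X_{n}$ stays in the star region $S$ of \eqref{S}. Using the Skorokhod decomposition $\dd\X_{n}=\sqrt{2D}\,\dd B_{n}+\nu(\X_{n})\,\dd L_{n}$, where $B_{n}$ is a standard $3$d Brownian motion, $\nu$ is the inward unit normal on $\partial\Omega$, and $L_{n}$ is the nondecreasing boundary local time, Itô's formula applied to $\x\mapsto\|\x-\x_{0}\|$ (with $\Delta\|\x-\x_{0}\|=2/\|\x-\x_{0}\|$ in $3$d) yields
\begin{align*}
\dd\rho_{n}=\frac{2D}{\rho_{n}}\,\dd t+\sqrt{2D}\,\dd\beta_{n}+\frac{(\X_{n}-\x_{0})\cdot\nu(\X_{n})}{\rho_{n}}\,\dd L_{n},
\end{align*}
where $\beta_{n}(t):=\int_{0}^{t}\frac{\X_{n}-\x_{0}}{\rho_{n}}\cdot\dd B_{n}$ is a standard one-dimensional Brownian motion by Lévy's characterization, and the $\beta_{n}$ are independent across $n$ because the $\X_{n}$ are. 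The first two terms are exactly the Bessel dynamics, so I couple by letting $R_{n}^{0}$ solve its defining SDE driven by this same $\beta_{n}$ (both processes start at $0$). The essential geometric input is that the local-time term is nonpositive: at a smooth boundary point $p\in\partial\Omega\cap S$ the star condition \eqref{star} forces the outward normal to have nonnegative radial component, i.e. $(p-\x_{0})\cdot\nu\le0$. Consequently $U_{t}:=\rho_{n}(t)-R_{n}^{0}(t)$ is a continuous finite-variation process with $U_{0}=0$ and $\dd U_{t}=2D\big(\tfrac{1}{\rho_{n}}-\tfrac{1}{R_{n}^{0}}\big)\dd t+(\text{nonpositive})\,\dd L_{n}$; on $\{U_{t}>0\}$ both the drift term (since there $\rho_{n}>R_{n}^{0}$) and the local-time term are $\le0$, so a Tanaka argument gives $U_{t}^{+}\equiv0$, that is $\rho_{n}\le R_{n}^{0}$ up to $\sigma_{n}$. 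In particular $R_{n}^{0}(\sigma_{n})\ge\rho_{n}(\sigma_{n})=z_{0}$, which yields $\tau_{n,\ball}^{0}\le\sigma_{n}$.

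Combining these, on this coupling $T_{\threed}\ge\min_{n}\sigma_{n}\ge\min_{n}\tau_{n,\ball}^{0}=T_{\ball}^{0}$ almost surely, and since the coupled family $(R_{n}^{0})$ is independent with the law prescribed in the statement, the claimed inequality $\P(T_{\ball}^{0}\ge t)\le\P(T_{\threed}\ge t)$ follows. I expect the main obstacle to be the rigorous justification of the radial decomposition together with the comparison near the origin, in two places: (i) deducing the sign $(p-\x_{0})\cdot\nu\le0$ from the star condition \eqref{star}, which is the elementary fact that a domain star-shaped about $\x_{0}$ has outward normal with nonnegative radial component; and (ii) controlling the singular Bessel drift $2D/r$ at $t=0$, where $\rho_{n}(0)=R_{n}^{0}(0)=0$, which I would handle by localizing the comparison on $[\eps,\sigma_{n}]$ and letting $\eps\downarrow0$, or equivalently by invoking a standard pathwise comparison theorem for one-dimensional SDEs with constant diffusion coefficient and decreasing drift.
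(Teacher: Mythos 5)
Your proposal is correct and follows essentially the same route as the paper's proof: the same pathwise bound $\tau_{n,\ball}\le\tau_{n}$ from \eqref{min}, the same radial decomposition of the reflected process into a 3d Bessel part plus a boundary local-time term, and the same use of the star condition \eqref{star} to show the local-time contribution is nonpositive, yielding $\rho_{n}\le R_{n}^{0}$ and hence $T_{\ball}^{0}\le T_{\threed}$ almost surely. Your write-up is in fact somewhat more careful than the paper's at two points the paper glosses over --- the explicit It\^o/L\'evy construction of the coupling with the Tanaka-type comparison (the paper writes $\dd R_{n}=\dd R_{n}^{0}+\n_{R}\,\dd L_{n}$ and subtracts directly) and the localization handling the singular drift at the common starting point $0$ --- but these are refinements within the same argument, not a different approach.
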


With these lemmas in place, the proof of Theorem~\ref{theorem all} follows quickly.

\begin{proof}[Proof of Theorem~\ref{theorem all}]
By Lemma~\ref{lemma ball}, we have that
\begin{align*}
\Big(\frac{z_{0}^{2}}{4D\log N}\Big)^{-m}\E[(T_{\ball}^{0})^{m}]
\le\Big(\frac{z_{0}^{2}}{4D\log N}\Big)^{-m}\E[(T_{\threed})^{m}]\quad\text{for each }N\ge1.
\end{align*}
Therefore, \eqref{ball} implies that
\begin{align*}
1
=\liminf_{N\to\infty}\Big(\frac{z_{0}^{2}}{4D\log N}\Big)^{-m}\E[(T_{\ball}^{0})^{m}]
&\le\liminf_{N\to\infty}\Big(\frac{z_{0}^{2}}{4D\log N}\Big)^{-m}\E[(T_{\threed})^{m}]\\
&\le\limsup_{N\to\infty}\Big(\frac{z_{0}^{2}}{4D\log N}\Big)^{-m}\E[(T_{\threed})^{m}]. 
\end{align*}
Applying Theorem~\ref{theorem upper} completes the proof.
\end{proof}

\subsection{Space dimension $d=2$}\label{section 2d}

The arguments above can be immediately applied to 2d spatial domains. We now briefly state the results. 

Let $\Omega\subset\R^{2}$ be a bounded and open 2d spatial domain with a target $\partial\Omega_{\T}\subseteq\partial\Omega$. Assume that $\partial\Omega_{\text{T}}$ and $\partial\Omega_{\text{R}}:=\partial\Omega\backslash\partial\Omega_{\T}$ are smooth and that $\partial\Omega_{\text{T}}$ is nonempty and relatively open in $\partial\Omega$. Suppose $N$ independent particles diffuse in $\Omega\subset\R^{2}$ with diffusivity $D>0$ and reflect from $\partial\Omega$. Suppose the particles are initially placed at $\x_{0}\in\Omega$ and assume there exists a ``target point'' $\x_{\T}\in\partial\Omega_{\text{T}}$ so that
\begin{align}\label{sl2}
\{(1-s)\x_{0}+s\x_{\text{T}}\in\R^{2}:s\in[0,1)\}
\subset\Omega.
\end{align}

Let $\X_{n}(t)\in\overline{\Omega}\subset\R^{2}$ denote the position of the $n$-th particle at time $t\ge0$. Define the first time that the $n$-th particle reaches the target,
\begin{align*}
\tau_{n}
:=\inf\{t>0:\X_{n}(t)\in\partial\Omega_{\text{T}}\},\quad n\in\{1,\dots,N\},
\end{align*}
and the first time any particle reaches the target,
\begin{align*}
T_{\twod}:=\min_{n}\{\tau_{n}\}.
\end{align*}
We now state the 2d analog of Theorem~\ref{theorem upper}.
\begin{theorem}\label{theorem upper2}
Let $T_{\oned}(z_{0})$ be the FPT defined in \eqref{Tz} for a 1d system, where $z_{0}=\|\x_{0}-\x_{\T}\|$ is the distance between the initial particle location $\x_{0}$ and any point $\x_{\T}$ in the target $\partial\Omega_{\T}$, assuming \eqref{sl2} holds. Then, for any moment $m\ge1$, we have that
\begin{align*}
\limsup_{N\to\infty}\frac{\E[(T_{\twod})^{m}]}{\E[(T_{\oned}(z_{0}))^{m}]}
=\limsup_{N\to\infty}\left(\frac{z_{0}^{2}}{4D\log N}\right)^{-m}\E[(T_{\twod})^{m}]
\le 1.
\end{align*}
\end{theorem}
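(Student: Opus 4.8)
The plan is to run the proof of Theorem~\ref{theorem upper} essentially verbatim, replacing the thin three-dimensional cylinder by a thin two-dimensional strip and replacing the transverse 2d Bessel process by an ordinary one-dimensional Brownian motion. First I would use \eqref{sl2} to fix coordinates $(x,z)$ with $\x_{\T}=(0,0)$ and $\x_0=(0,z_0)$ so that the segment from $\x_0$ to $\x_{\T}$ lies in $\Omega$, and for small $r>0$ introduce the rectangle $\Sigma_{r,b,c}:=\{(x,z)\in\R^{2}:|x|<r,\ z\in(b,c)\}$, choosing $b\le 0<z_0<c$ exactly as in Section~\ref{section general} so that $\Sigma_{r,z_0,c}\subset\Omega$, $\Sigma_{r,b,c}\cap\partial\Omega\subset\partial\Omega_{\T}$, and $b$ lies below the target. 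The transverse escape time becomes $\tau_{n,\side}:=\inf\{t>0:|X_n(t)|=r\}$ and the top-escape time is $\tau_{n,\top}:=\inf\{t>0:Z_n(t)=c\}$; the auxiliary height process $\Z_n$, the penalized times $\tau_n^{+}$, and $T_{+}$ are then defined by the identical formulas.

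Next I would invoke the two-dimensional analogs of Lemma~\ref{lemma reflect} and Lemma~\ref{lemma same}. The analog of Lemma~\ref{lemma reflect} follows from the same geometric observation: if some particle's height reaches $z=b$ while the particle remains inside the strip (i.e.\ before $\tau_{n,\side}$ and $\tau_{n,\top}$), then it must have crossed $z=0$ inside $\Sigma_{r,b,c}$ and hence hit $\partial\Omega_{\T}$, so $T_{\twod}\le T_{+}$ on the high-probability event that this occurs, while the ``$+1$'' penalty prevents escaping particles from spuriously lowering $T_{+}$. The analog of Lemma~\ref{lemma same} is proved by repeating the argument of Theorem~\ref{theorem simple} for the strip: before escaping, $Z_n$ is a free one-dimensional Brownian motion travelling the distance $z_0+|b|$ to the bottom, and with high probability the fastest particle to reach the bottom does so without leaving the strip, giving $\E[(T_{+})^{m}]\sim\E[(T_{\oned}(z_0+|b|))^{m}]\sim\big((z_0+|b|)^{2}/(4D\log N)\big)^{m}$.

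The only ingredient that genuinely changes is the short-time side-escape estimate of Lemma~\ref{bound radial}. In two dimensions the transverse coordinate $X_n$ is a one-dimensional Brownian motion with variance $2Dt$, so the reflection principle gives
\begin{align*}
\P(\tau_{1,\side}\le t)
=\P\Big(\sup_{0\le s\le t}|X_1(s)|\ge r\Big)
\le 2\,\P(|X_1(t)|\ge r)
\le C\exp\!\big(-r^{2}/(4Dt)\big),
\end{align*}
which for small $t$ is dominated by the bound $C_1 t^{-1/2}\exp(-C_2 t^{-1})$ required to run the argument (indeed the reduction in dimension makes the transverse process \emph{simpler}); Lemma~\ref{bound z} is unchanged because the height is again a one-dimensional Brownian motion. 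With these two estimates in hand the proof of Theorem~\ref{theorem simple} applies word for word.

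Finally, combining the two analogs as in the proof of Theorem~\ref{theorem upper} yields $\limsup_{N\to\infty}\E[(T_{\twod})^{m}]/\E[(T_{\oned}(z_0+|b|))^{m}]\le 1$, and letting $r\to0$ (hence $|b|\to0$) together with \eqref{ball} gives the claimed bound $\le 1$. I expect no substantive obstacle: all of the geometric bookkeeping is identical to the three-dimensional case, and the single point requiring a sentence of care is confirming that the one-dimensional side-escape probability retains the $\exp(-C/t)$ decay, which it does.
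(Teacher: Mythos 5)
Your proposal is correct and is essentially the paper's own proof: the paper establishes Theorem~\ref{theorem upper2} by stating that the three-dimensional argument (the strip/cylinder $\Sigma_{r,b,c}$, the auxiliary process $\Z_{n}$, the penalized times $\tau_{n}^{+}$, and the 2d analogs of Lemmas~\ref{lemma reflect} and \ref{lemma same}, followed by letting $r\to0$ so that $|b|\to0$) applies immediately in dimension two, which is precisely what you carry out. Your one substantive observation---that the side-escape bound of Lemma~\ref{bound radial} is replaced by a reflection-principle estimate for the one-dimensional transverse Brownian coordinate, retaining the $\exp(-C/t)$ decay needed for the argument---is exactly the right (and only) modification.
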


To state the 2d analog of Theorem~\ref{theorem all}, let $\x_{\T}^{*}\in\overline{\partial\Omega_{\T}}$ minimize the distance to $\x_{0}$, and assume \eqref{sl2} holds with $\x_{\T}$ replaced by $\x_{\T}^{*}$. In addition, let $z_{0}:=\|\x_{0}-\x_{\T}^{*}\|$, and assume that the region
\begin{align*}
S
:=\{\x\in\R^{2}:\|\x-\x_{0}\|< z_{0}\}\cap\Omega\in\R^{2}
\end{align*}
is a star domain, meaning
\begin{align}\label{star2}
\{(1-s)\x_{0}+s\x\in\R^{2}:s\in[0,1]\}
\subset S\quad\text{for all }\x\in S.
\end{align}
{Again, we note that assuming the domain $\Omega$ is convex ensures that }\eqref{star2}{ holds (though }\eqref{star2}{ can hold even if $\Omega$ is not convex).}

\begin{theorem}\label{theorem all2}
Under the assumptions just given, for any moment $m\ge1$ we have that
\begin{align*}
\E[(T_{\twod})^{m}]
\sim\E[(T_{\oned}(z_{0}))^{m}]
\sim\Big(\frac{z_{0}^{2}}{4D\log N}\Big)^{m}
\quad\text{as }N\to\infty.
\end{align*}
\end{theorem}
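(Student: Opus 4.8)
The plan is to mirror the proof of Theorem~\ref{theorem all} line for line, since the two-dimensional setting is structurally identical to the three-dimensional one. The upper bound is already in hand: Theorem~\ref{theorem upper2} gives
\[
\limsup_{N\to\infty}\Big(\frac{z_{0}^{2}}{4D\log N}\Big)^{-m}\E[(T_{\twod})^{m}]\le 1.
\]
Thus the only missing ingredient is a matching lower bound, which I would obtain from a two-dimensional analog of Lemma~\ref{lemma ball}.

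For that lower bound I would introduce, for each $n\in\{1,\dots,N\}$, a \emph{2d} Bessel process
\[
\dd R_{n}^{0}(t)=\frac{D}{R_{n}^{0}(t)}\,\dd t+\sqrt{2D}\,\dd W_{n}(t),\qquad R_{n}^{0}(0)=0,
\]
(the radial part of a planar $D$-diffusion) together with the escape times $\tau_{n,\ball}^{0}:=\inf\{t>0:R_{n}^{0}(t)\ge z_{0}\}$ and $T_{\ball}^{0}:=\min_{n}\{\tau_{n,\ball}^{0}\}$. The claim to establish is $\P(T_{\ball}^{0}\ge t)\le\P(T_{\twod}\ge t)$ for every $t\ge0$. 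The argument is a radial comparison. By the minimality \eqref{min}, every target point lies at distance at least $z_{0}$ from $\x_{0}$, so the $n$-th particle cannot reach $\partial\Omega_{\T}$ before its radial coordinate $\rho_{n}(t):=\|\X_{n}(t)-\x_{0}\|$ first attains $z_{0}$; hence $\tau_{n}\ge\inf\{t>0:\rho_{n}(t)\ge z_{0}\}$. While $\rho_{n}<z_{0}$ the particle lives in the region $S$ of \eqref{S}, and an It\^o/Tanaka decomposition writes $\rho_{n}$ as a 2d-Bessel drift plus a Brownian martingale plus a boundary local-time term. The star condition \eqref{star2} forces the outward unit normal along $\partial\Omega\cap S$ to have nonnegative inner product with the outward radial direction $\x-\x_{0}$, so the local-time term contributes a \emph{nonpositive} increment to $\rho_{n}$: reflection can only push the particle toward $\x_{0}$. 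Driving $R_{n}^{0}$ by the same Brownian motion $W_{n}$ then gives $\rho_{n}(t)\le R_{n}^{0}(t)$ up to the escape time, so $\tau_{n,\ball}^{0}\le\tau_{n}$, and taking minima over $n$ yields $T_{\ball}^{0}\le T_{\twod}$ pathwise.

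With this inequality established, the proof finishes exactly as for Theorem~\ref{theorem all}. Specializing the spherically symmetric asymptotics \eqref{ball} to $d=2$ gives $\big(\frac{z_{0}^{2}}{4D\log N}\big)^{-m}\E[(T_{\ball}^{0})^{m}]\to1$, and the stochastic inequality $\P(T_{\ball}^{0}\ge t)\le\P(T_{\twod}\ge t)$ upgrades this to
\[
\liminf_{N\to\infty}\Big(\frac{z_{0}^{2}}{4D\log N}\Big)^{-m}\E[(T_{\twod})^{m}]\ge 1.
\]
Combining this with the $\limsup$ bound from Theorem~\ref{theorem upper2} sandwiches the limit at $1$, which is the assertion.

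I expect the radial comparison to be the main obstacle: one must justify rigorously that, under the star condition, the reflecting boundary only decelerates the radial escape, so that $\rho_{n}$ stays beneath the free Bessel process $R_{n}^{0}$. This requires care with the local-time term in the semimartingale decomposition and with the singular Bessel drift at the origin, since each particle starts exactly at $\x_{0}$, where the drift $D/R$ blows up; this is precisely why one compares against a genuine 2d Bessel process rather than against $\|\X_{n}\|$ directly. Once this comparison is in place, the remainder is a verbatim transcription of the three-dimensional argument.
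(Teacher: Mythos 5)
Your proposal is correct and is exactly the argument the paper intends: the paper proves Theorem~\ref{theorem all2} by declaring that the proof of Theorem~\ref{theorem all} ``can be immediately applied,'' i.e.\ the upper bound from Theorem~\ref{theorem upper2} combined with a 2d analog of Lemma~\ref{lemma ball} and the $d=2$ case of \eqref{ball}, which is precisely your route. You even get the one detail that actually changes right --- the Bessel drift becomes $D/R$ rather than $2D/R$ --- and your handling of the star condition and the local-time comparison matches the paper's proof of Lemma~\ref{lemma ball} verbatim.
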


\section{Discussion} 

{In this paper,} we determined the asymptotic behavior of the fastest FPT for a collection of $N\gg1$ identically distributed Brownian searchers to reach a target. Our results hold in a general class of 2d and 3d spatial domains that may have multiple targets that can be on the outer boundaries or inner boundaries of the domain (see Figure~\ref{figextreme}b). Our main conclusion is that the leading order behavior of the $m$-th moment of this fastest FPT does not depend on the details of the spatial domain (including the dimension), but rather only depends on the distance between the initial searcher location and the nearest target.

We proved our results assuming (i) that there exists a straight line path from the initial particle location to a target (see \eqref{sl}) and (ii) that a certain so-called star condition holds (see \eqref{star}). These assumptions allowed us to give relatively short proofs of our results. {However, we recently developed an alternative approach which proves that these assumptions are superfluous }\cite{lawley2019uni}{. Indeed, this more recent work proves that the asymptotic behavior in }\eqref{1d}{ holds under very general conditions, including (i) diffusions in $\R^{d}$ with space-dependent diffusivities and drift fields and (ii) diffusions on $d$-dimensional smooth Riemannian manifolds that may contain reflecting obstacles} \cite{lawley2019uni}. {In addition, we have also recently determined how the details of the spatial domain can affect the fastest FPT statistics at second order} \cite{lawley2019gumbel}. {Further, the results in }\cite{lawley2019gumbel}{ describe the limiting probability distribution of the fastest FPTs in terms of a certain Gumbel distribution involving the so-called LambertW function. For recent works emphasizing the importance of the full first passage time distribution, see }\cite{grebenkov2018, grebenkov2018b, lawley2019imp, bernoff2018boundary, lawley2019patchy}.

More generally, since the fastest FPT is the minimum of a large set of iid random variables, investigating its distribution falls into the field of \emph{extreme statistics} \cite{gumbel2012}. The theory of extreme statistics has been heavily used in disciplines such as finance, engineering, and earth sciences \cite{coles2001, novak2011}, but the theory is relatively unknown in biology. We expect that extreme statistics will soon find many applications in biology. Indeed, several recent commentaries have made this prediction and suggested various applications \cite{basnayake2019, coombs2019, redner2019, sokolov2019, rusakov2019, martyushev2019, tamm2019}.

Finally, a distinguishing aspect of our approach to extreme FPT theory in this paper is the use of probabilistic methods. In contrast, prior work has tended to employ asymptotic and perturbation methods \cite{basnayake2019, meerson2015, yuste_diffusion_2000, yuste2001, ro2017}. Our approach thus follows some other recent studies that have applied probabilistic methods to Brownian escape problems that are more commonly studied by PDE asymptotics \cite{PB3, PB4, PB2} (see also \cite{lawley2018prob, handy2018, handy2019}). Going forward, we anticipate that techniques from probability theory will continue to be useful in extreme FPT theory.

\section{Appendix}

In this Appendix, we collect the proofs of all the lemmas.

\begin{proof}[Proof of Lemma~\ref{bound radial}]
First notice that $R_{n}(t)$ is a 2d Bessel process. In particular, $R_{n}(t)$ has the same law as $\sqrt{2D}\sqrt{W_{1}^{2}(t)+W_{2}^{2}(t)}$, where $W_{1}(t)\in\R$ and $W_{2}(t)\in\R$ are independent standard Brownian motions. Next, define the 3d Bessel process,
\begin{align*}
R^{\text{3d}}(t)
:=\sqrt{2D}\sqrt{W_{1}^{2}(t)+W_{2}^{2}(t)+W_{3}^{2}(t)},
\end{align*}
where $W_{3}(t)\in\R$ is a third independent standard Brownian motion. If $\tau$ is the first time that $R^{\text{3d}}(t)$ hits radius $a>0$,
\begin{align*}
\tau
:=\inf\{t>0:R^{\text{3d}}(t)=a\},
\end{align*}
then it is immediate that $\P(\tau_{1,\side}\le t)\le\P(\tau\le t)$ for all $t\ge0$. Since $\P(\tau\le t)$ has the explicit formula (see, for example, \cite{bernoff2018}),
\begin{align*}
\P(\tau\le t)
=\sqrt{\frac{4a^{2}}{\pi Dt}}\sum_{k=0}^{\infty}e^{-a^{2}(n+\frac{1}{2})^{2}/(Dt)},
\end{align*}
we may take $C_{1}=4a(\pi D)^{-1/2}$ and $C_{2}=a^{2}/(4D)$ to complete the proof.
\end{proof}

\begin{proof}[Proof of Lemma~\ref{bound z}]
Recall that $\tau_{1,z}$ is the first time $Z_{1}(t)$ hits $z=0$, assuming $Z_{1}(0)=z_{0}>0$ and $Z_{1}(t)$ reflects from $z=h\ge z_{0}$. It is immediate that
\begin{align*}
\P(\tau_{1,z}>t)
\le\P(\tau>t),
\end{align*}
where $\tau$ is the first time a standard Brownian motion, $W(t)\in\R$, hits $z_{0}/\sqrt{2D}$,
\begin{align*}
\tau:=\inf\{t>0:W(t)=z_{0}/\sqrt{2D}\}.
\end{align*}
The reflection principle \cite{durrett2019} then gives
\begin{align*}
\P(\tau>t)
=1-\P(\tau\le t)
=1-2\P(W(t)\ge z_{0}/\sqrt{2D})
=\text{erf}(z_{0}/\sqrt{4Dt}),
\end{align*}
where $\text{erf}(x)$ denotes the error function, which has the large $x$ behavior,
\begin{align}\label{erf}
\text{erf}(x):=\frac{1}{\sqrt{\pi}}\int_{-x}^{x}e^{-s^{2}}\,\dd s
=1-\frac{e^{-x^{2}}}{x\sqrt{\pi}}\Big(1-\frac{2}{x^{2}}+\O(x^{-4})\Big)\quad\text{as }x\to\infty.
\end{align}
Hence, we may take $C_{3}=\frac{\sqrt{D}}{\sqrt{\pi}z_{0}}$ and $C_{4}=z_{0}^{2}/(4D)$ to complete the proof.
\end{proof}

\begin{proof}[Proof of Lemma~\ref{lemma reflect}]
Define the event
\begin{align*}
B
&:=\{\tau_{1}<\min\{\tau_{1,\side},\tau_{1,\top}\}\},
\end{align*}
and let $B^{c}$ denote its complement. It is immediate that
\begin{align*}
\P(\tau_{1}>t|B)
\le\P(\tau_{1}^{+}>t|B),
\end{align*}
since $\tau_{1}\le\tau_{1}^{+}$ if $\tau_{1}<\min\{\tau_{1,\side},\tau_{1,\top}\}$. Next, by the definition of $\tau_{1}^{+}$, we have that
\begin{align*}
\P(\tau_{1}^{+}>t|B^{c})=1\quad\text{for all }t<1.
\end{align*}
Therefore
\begin{align*}
\P(\tau_{1}>t)
\le\P(\tau_{1}^{+}>t|B)\P(B)
+\P(B^{c})=\P(\tau_{1}^{+}>t)\quad\text{for all }t<1.
\end{align*}

Now, it is an identity that
\begin{align*}
\E[T_{\threed}^{m}]
=
\int_{0}^{1}(\P(\tau_{1}>t^{1/m}))^{N}\,\dd t
+\int_{1}^{\infty}(\P(\tau_{1}>t^{1/m}))^{N}\,\dd t.
\end{align*}
Therefore, we have that
\begin{align*}
\E[T_{\threed}^{m}]
&=
\int_{0}^{1}(\P(\tau_{1}>t^{1/m}))^{N}\,\dd t
+\int_{1}^{\infty}(\P(\tau_{1}>t^{1/m}))^{N}\,\dd t\\
&\le
\int_{0}^{1}(\P(\tau_{1}^{+}>t^{1/m}))^{N}\,\dd t
+(\P(\tau_{1}>1))^{N}\int_{1}^{\infty}\frac{\P(\tau_{1}>t^{1/m})}{\P(\tau_{1}>1)}\,\dd t.
\end{align*}
Therefore, Jensen's inequality yields
\begin{align*}
\frac{\E[T_{\threed}^{m}]}{\E[T_{+}^{m}]}
&\le
\frac{\int_{0}^{1}(\P(\tau_{1}^{+}>t^{1/m}))^{N}\,\dd t
+(\P(\tau_{1}>1))^{N}\int_{1}^{\infty}\frac{\P(\tau_{1}>t^{1/m})}{\P(\tau_{1}>1)}\,\dd t}{(\E[T_{+}])^{1/m}}\\
&\le1+\frac{(\P(\tau_{1}>1))^{N}}{(\E[T_{+}])^{1/m}}\int_{1}^{\infty}\frac{\P(\tau_{1}>t^{1/m})}{\P(\tau_{1}>1)}\,\dd t.
\end{align*}
To complete the proof, we need only that $\E[T_{+}]$ decays slower than $(\P(\tau_{1}>1))^{N}$ as $N\to\infty$, which is implied by Lemma~\ref{lemma same}.
\end{proof}

\begin{proof}[Proof of Lemma~\ref{lemma same}]

Let $T_{\oned}$ be the first time that any $\Z_{n}$ hits $z=b$,
\begin{align*}
T_{\oned}
:=\min_{n}\{\wtau_{n,\bot}\},
\end{align*}
and let $\nf\in\{1,\dots,N\}$ denote the random index of this fastest $\Z_{n}$ particle
\begin{align*}
\nf\in\{1,\dots,N:\Z_{\nf}(T_{\oned})=b\}.
\end{align*}
Define the event that the fastest particle first escapes the cylinder through the bottom
\begin{align*}
A
:=\{\tau_{\nf,\side}>T_{\oned}\}\cap\{\wtau_{\nf,\top}>T_{\oned}\}
=\{T_{+}=T_{\oned}\}.
\end{align*}
Since
\begin{align*}
T_{+}1_{A}=T_{\oned}1_{A}\quad\text{almost surely},
\end{align*}
we can apply the same argument as in Section~\ref{section cylinder} if we can manage to show that $\sqrt{\P(A^{c})}(\log N)^{m}\to0$ as $N\to\infty$.

By De Morgan's laws, we have that
\begin{align*}
\P(A^{c})\le\P(\tau_{\nf,\side}\le T_{\oned})+\P(\wtau_{\nf,\top}\le T_{\oned}).
\end{align*}
The first probability can be handled by the same argument as in Section~\ref{section cylinder}. 

To handle $\P(\wtau_{\nf,\top}\le T_{\oned})$, note that
\begin{align*}
\P(\wtau_{\nf,\top}\le T_{\oned})
\le\P(\wtau_{\nf,\top}\le\delta)+\P(T_{\oned}\ge\delta)\quad\text{for any }\delta>0.
\end{align*}
As in the proof of Theorem~\ref{theorem simple}, we may bound $\P(T_{\oned}\ge\delta)$ for sufficiently small $\delta>0$ by
\begin{align*}
\P(T_{\oned}\ge\delta)
=\int_{\delta}^{\infty}N(S_{0}(t))^{N-1}f_{0}(t)\,\dd t
&\le N(S_{0}(\delta))^{N-1}\\
&\le N\big[1-C_{3}\delta^{1/2}\exp(-C_{4}/\delta)\big]^{N-1}.
\end{align*}

Furthermore, it is immediate that
\begin{align*}
\P(\wtau_{\nf,\top}\le\delta)
\le\P(\wtau_{1,\top}\le\delta).
\end{align*}
Now, notice that $\wtau_{1,\top}$ has the same law as the first time a standard Brownian motion, $W(t)\in\R$, hits $(h-z_{0})/\sqrt{2D}$. In particular, the reflection principle \cite{durrett2019} gives
\begin{align*}
\P(\wtau_{1,\top}\le t)
=2\P(W(t)\ge (h-z_{0})/\sqrt{2D})
=1-\text{erf}((h-z_{0})/\sqrt{4Dt}),
\end{align*}
where $\text{erf}(x)$ denotes the error function with large $x$ behavior given in \eqref{erf}. Hence,
\begin{align*}
\P(\wtau_{\nf,\top}\le\delta)
\le\frac{2}{\sqrt{\pi}}\frac{\sqrt{4D\delta}}{(h-z_{0})}e^{-(h-z_{0})^{2}/(4D\delta)}\quad\text{for sufficiently small }\delta>0.
\end{align*}
Taking $\delta=(\log N)^{-1/2}$ completes the proof.
\end{proof}

\begin{proof}[Proof of Lemma~\ref{lemma ball}]
Define the first time that the $n$-th particle leaves a ball centered at $\x_{0}$ of radius $z_{0}>0$,
\begin{align*}
\tau_{n,\ball}
:=\inf\{t>0:\|\X_{n}(t)-\x_{0}\|\ge z_{0}\}.
\end{align*}
It is immediate that the $n$-th particle cannot reach the target $\partial\Omega_{\T}$ before time $\tau_{n,\ball}$. That is,
\begin{align}\label{balltime}
\tau_{n,\ball}\le\tau_{n}\quad\text{almost surely for each }n\in\{1,\dots,N\}.
\end{align}

Define the radial process
\begin{align*}
R_{n}(t)
:=\|\X_{n}(t)-\x_{0}\|\quad\text{for }t\ge0.
\end{align*}
Notice that $R_{n}$ is not a three-dimensional Bessel process, due to the reflecting boundary $\partial\Omega$. However, $R_{n}$ does satisfy the following stochastic differential equation,
\begin{align}\label{rsde}
\dd R_{n}(t)
=\dd R_{n}^{0}(t)+\n_{R}(\X_{n}(t))\,\dd L_{n}(t),
\end{align}
where $R_{n}^{0}$ is a 3d Bessel process. In \eqref{rsde}, $\n_{R}$ is the radial component of the inner normal field $\n:\partial\Omega\mapsto\R^{3}$, and $L_{n}(t)$ is the local time of $\X_{n}(t)$ on $\partial\Omega$. More precisely, $L_{n}(t)$ is nondecreasing and increases only when $\X_{n}(t)$ is on $\partial\Omega$. The significance of the local time term in \eqref{rsde} is that it forces $\X_{n}(t)$ to reflect from $\partial\Omega$.

By our assumption in \eqref{star} that $S$ in \eqref{S} is a star domain, we are assured that $\n_{R}(\X_{n}(t))\le0$ for all $t\le\tau_{n,\ball}$. Hence, $R_{n}(t)\le R_{n}^{0}(t)$ for all $t\le\tau_{n,\ball}$. Therefore, 
$\tau_{n,\ball}^{0}\le\tau_{n,\ball}$ almost surely. Hence, $T_{\ball}^{0}\le T_{\threed}$ almost surely by \eqref{balltime}.
\end{proof}

\subsubsection*{Acknowledgments}
SDL was supported by the National Science Foundation (Grant Nos.\ DMS-1814832 and DMS-1148230). JBM was supported by the National Science Foundation (Grant No.\ DMS-1814832).

\bibliography{library}
\bibliographystyle{unsrt}
\end{document}